\newcommand{\bra}[1]{\langle#1|}
\newcommand{\ket}[1]{|#1\rangle}
\newtheorem{theorem}{Theorem}[section]
\newtheorem{lemma}[theorem]{Lemma}
\newtheorem{proposition}[theorem]{Proposition}
\theoremstyle{remark}
\theoremstyle{definition}
\theoremstyle{example}
\theoremstyle{notation}
\begin{document}
\title{ The groupoid of bifractional transformations}
\author{S. Agyo, C. Lei, A. Vourdas\\
Department of Computer Science\\University of Bradford\\ Bradford BD7 1DP, UK}

\begin{abstract}
Bifractional transformations which lead to quantities that interpolate between other known quantities, are considered.
They do not form a group, and groupoids are used to described their mathematical structure.
Bifractional coherent states and bifractional Wigner functions are also defined.
The properties of the bifractional coherent states are studied.
The bifractional Wigner functions are used in generalizations of the Moyal star formalism.
A generalized Berezin formalism in this context, is also studied.

\end{abstract}

\maketitle
\section{Introduction}
Phase space methods \cite{Z1,Z2} are an important part of quantum mechanics.
Techniques like fractional Fourier transforms\cite{F1,F2,F3,F4}, coherent states\cite{C1,C2,C3}, analytic representations\cite{B1,B2,B3}, Wigner and Weyl functions, Moyal formalism\cite{M1,M2}, 
Berezin formalism\cite{BR1,BR2,BR3,BR4}, etc,
provide a deeper understanding of the nature of a quantum particle.

In this paper we introduce bifractional transforms, which lead to new quantities that interpolate between other known quantities.
 A preliminary version of this has been presented in \cite{ALV}.
Here we go deeper and expand these ideas as follows:
\begin{itemize}
\item
In section \ref{two} we review briefly for later use, the mathematical structure of groupoids  \cite{GR1,GR2,GR3, GR4}.
They are weaker but more general than groups. 
Applications of groupoids in physics include non-commutative geometry \cite{A1,A2}, quantum tomography \cite{A3}, etc. 
\item
 In section \ref{AA1} we introduce the bifractional displacement operators.
 They are two-dimensional  fractional Fourier transforms, but we stress that they are not a straightforward generalization of the one-dimensional 
 fractional Fourier transforms to the two-dimensional case (a technical point that we explain in section \ref{AA1}).
 The bifractional transformations do not form a group and we use groupoids to describe their mathematical structure.
 We also study the marginal properties of the bifractional transforms (section \ref{p22}).
\item
In section \ref{AA}, we act with the bifractional operators on the vacuum, and we get bifractional coherent states.
We study their analyticity properties and their resolution of the identity (proposition \ref{P1}).
We also study their overlaps and interpret the result in terms of a distance (proposition \ref{P2}).
The presentation emphasizes the difference between the formulas for standard coherent states, and the corresponding formulas for the bifractional coherent states.
\item
In section \ref{AA2} we study bifractional Wigner functions $A(\alpha, \beta;\theta_1,\theta_2|\rho)$.
Their marginal properties follow immediately from the marginal properties of the 
bifractional displacement operators in section \ref{p22}. In addition to that we give in proposition \ref{P100}, extra marginal properties that involve the 
$|A(\alpha, \beta;\theta_1,\theta_2|\rho)|^2$.
We also interpret physically the bifractional Wigner functions, as quantities which interpolate between quantum noise and quantum correlations.
\item
 In section \ref{AA3} we present briefly the Moyal star formalism for bifractional Wigner functions (proposition \ref{P15}).
\item
In section \ref{AA4} we present briefly the corresponding  Berezin formalism (proposition \ref{p13}).  
\end{itemize}

\section{Preliminaries: Groupoid ${\cal T}$ over ${\cal B}$}\label{two}
There are many cases where the concept of group is too strong for the description of a particular symmetry.
A weaker concept is the groupoid, which is designed for `variable symmetries'. 
Groups are special cases of groupoids.

A groupoid is a set ${\cal T}$ over a base set ${\cal B}$ such that
\begin{itemize}
\item
there two maps from ${\cal T}$ to ${\cal B}$
\begin{eqnarray}
s_1(t)=u_1;\;\;\;s_2(t)=u_2;\;\;\;\;t\in {\cal T};\;\;\;\;u_1,u_2\in {\cal B}
\end{eqnarray}
$u_1$ is the `source' of $t$, and $u_2$ the `target' of $t$.
$t$ can be viewed as an `arrow' which starts at $u_1$ and ends at $u_2$.
\item
A partial associative multiplication $t_1t_2$ is defined only in the case that $s_2(t_1)=s_1(t_2)$ 
(`the target of the first arrow is the same as the source of the second arrow').
\item
There is an involution (`inverse')
\begin{eqnarray}
t\;\rightarrow\;t^{-1};\;\;\;\;[t^{-1}]^{-1}=t.
\end{eqnarray}
\item
The elements $L_t=tt^{-1}$ and $R_t=t^{-1}t$ are called left and right identities, and they are in general different. 
Also
\begin{eqnarray}
L_t t=tR_t=t
\end{eqnarray}
Furthermore
\begin{eqnarray}
&&s_1(t^{-1}t)=s_2(t^{-1}t)=s_1(t^{-1})=s_2(t)\nonumber\\
&&s_1(tt^{-1})=s_2(tt^{-1})=s_2(t^{-1})=s_1(t).
\end{eqnarray}
The base set ${\cal B}$, is isomorphic to the set of all left identities and to the set of all right identities. 
\item
The set of all elements $t$ such that $s_1(t)=s_2(t)$ can be shown to form a group, called the isotropy group
\end{itemize}

In the special case that the base set ${\cal B}$ contains only one element 
($L_t=R_t={\bf 1}$), the multiplication is defined for all elements $t_1,t_2$, and the groupoid is a group. 
If for all $u_1,u_2\in {\cal B}$ there exists $t\in {\cal T}$ such that $s_1(t)=u_1$ and $s_2(t)=u_2$, the groupoid is called connected or transitive.

\section{Bifractional displacement operators}\label{AA1}
Let $\hat{x}, \hat{p}$ be the position and momentum operators of the harmonic oscillator. We consider the displacement operators 
\begin{eqnarray}
D(\alpha, \beta) = \exp(i\sqrt{2}\beta\hat{x}-i\sqrt{2}\alpha\hat{p}),
\end{eqnarray}
and the displaced parity operators
\begin{eqnarray}
\Pi(\alpha, \beta)&=&D\left (\frac{\alpha}{2},\frac{\beta}{2}\right )\Pi(0,0)D^\dagger \left (\frac{\alpha}{2},\frac{\beta}{2}\right )\nonumber\\
&=&D(\alpha, \beta)\Pi(0,0);\nonumber\\ \Pi(0,0)&=&\int dx\ket{x}\bra{-x}.
\end{eqnarray}
They are related through the two-dimensional Fourier transform (e.g., \cite{G1,G2,V1}) 
\begin{eqnarray}\label{7}
\Pi(\alpha, \beta)&=&\frac{1}{2\pi} \int  D(\alpha ', \beta ')\exp\left [i(\beta \alpha '-\beta '\alpha)\right ] d\alpha ' d\beta '\nonumber\\
&=& \int d\alpha 'd\beta '\Delta \left (\beta, \alpha ';\frac{\pi}{2}\right )
\Delta \left (\alpha,-\beta ';\frac{\pi}{2}\right ) D(\alpha ', \beta ')
\end{eqnarray}
We also consider the fractional Fourier transform:
\begin{eqnarray}\label{2}
\Delta (x,y;\theta)=\left [\frac {1+i\cot{\theta}}{2\pi}\right ]^{1/2}
\exp \left [\frac{-i(x^2+y^2)\cot\theta}{2}+\frac{ixy}{ \sin \theta}\right ]
\end{eqnarray}
Special cases of this are 
\begin{eqnarray}
&&\Delta (x,y;0)=\delta (x-y)\nonumber\\
&&\Delta \left (x,y;\frac{\pi}{2}\right )=\frac{\exp (ixy)}{(2\pi)^{1/2}}\nonumber\\
&&\Delta (x,y;\pi)=\delta (x+y).
\end{eqnarray}
We can prove that
\begin{eqnarray}\label{11}
\int dy\Delta (x,y;\theta _1)\Delta (y,z;\theta _2)=\Delta (x,z;\theta _1+\theta _2).
\end{eqnarray}
In \cite{ALV} we have generalized Eq.(\ref{7}) by replacing the Fourier transforms with fractional Fourier transforms. 
This led to the bifractional displacement operators
\begin{eqnarray}\label{1}
&&U(\alpha, \beta; \theta _1, \theta _2)
=|\cos(\theta_1-\theta_2)|^{\frac{1}{2}} \int d\alpha 'd\beta '\Delta \left (\beta,\alpha ';\theta _2\right )
\Delta \left (\alpha,-\beta ';\theta _1\right )
D(\alpha ', \beta ')
\end{eqnarray}
They are unitary operators.
The proof of unitarity is based on the integral
\begin{eqnarray}\label{24}
&&\int d\alpha 'd\beta 'd\alpha ''d\beta ''\Delta \left (\beta,\alpha ';\theta _2\right )
\Delta \left (\alpha,-\beta ';\theta _1\right )
D(\alpha ', \beta ')\Delta \left (-\beta,\alpha '';-\theta _2\right )
\Delta \left (-\alpha,-\beta '';-\theta _1\right )
D(\alpha '', \beta '')\nonumber\\&&=\frac{{\bf 1}}{|\cos (\theta_1-\theta _2)|}
\end{eqnarray}
In order to prove Eq.(\ref{24}) we use the relation 
\begin{eqnarray}
D(\alpha ', \beta ')D(\alpha '', \beta '')=D(\alpha '+\alpha '', \beta '+\beta '')\exp [i(\beta' \alpha ''-\alpha' \beta '')].
\end{eqnarray}
In the integration we are careful with the ordering of operators.
The variables $\alpha\rq{}, \beta \rq{}$ in Eq.(\ref{1}), are dual to each other and in this sense our fractional Fourier transform is not a 
straightforward generalization into two dimensions, of the one-dimensional fractional Fourier transform. 
Below we give a deeper explanation of the origin of the prefactor $|\cos(\theta_1-\theta_2)|^{\frac{1}{2}}$.

Since $\Delta (x,y;\theta +\pi)=\Delta (x,-y;\theta)$ it follows that 
\begin{eqnarray}
U(\alpha, \beta; \theta _1+\pi, \theta _2)=U(-\alpha, \beta; \theta _1, \theta _2);\;\;\;\;\;
U(\alpha, \beta; \theta _1, \theta _2+\pi)=U(\alpha, -\beta; \theta _1, \theta _2)
\end{eqnarray}
Therefore we can take $(\theta _1,\theta _2)\in {\mathfrak T}=[0,\pi)\times [0,\pi)-{\cal L}$, where ${\cal L}$ is the lines $\theta _1-\theta _2=\pm \frac{\pi}{2}$. We exclude them because in this case the prefactor in Eq.(\ref{1}) is zero.
The following are special cases:
\begin{eqnarray}
&&U\left (\alpha, \beta;0,0\right )=D(\beta, -\alpha)\nonumber\\
&&U\left (\alpha, \beta; \frac{\pi}{2}, \frac{\pi}{2}\right )=\Pi(\alpha, \beta)\nonumber\\
&&U\left (\alpha, \beta; \pi, \pi\right )=D(-\beta, \alpha).
\end{eqnarray}

\subsection{ $(\theta_1,\theta_2)$ axes in phase space and the origin of the factor $\cos(\theta_1-\theta_2)$}\label{159}

In most of the formulas throughout the paper we get the factor $\cos (\theta _1-\theta _2)$. 
The following arguments show the Jacobian nature of this factor, and also give a distance used later, in terms of coordinates in a non-orthogonal frame.

We consider an orthogonal frame $x-y$, and a non-orthogonal frame $x'-y'$ as shown in fig. \ref{f1}.
The `bifractional transform' in the present context is to rotate the $x$-axis by an angle $\theta _1$, and the $y$-axis by an angle $\theta _2$,
and change variables from $x,y$ to $x',y'$. Let $(x_0,y_0)$ and $(x_0',y_0')$ be the coordinates of a point in these two frames, correspondingly. 
With elementary trigonometry, we express the $(x_0',y_0')$ in terms of $(x_0,y_0)$ as follows
\begin{eqnarray}\label{app}
&&x_0'=Ax_0+By_0;\;\;\;\;y_0'=Cx_0+Dy_0\nonumber\\
&&A=\frac{\cos \theta _2}{\cos (\theta _1-\theta _2)};\;\;\;\;\;
B=\frac{\sin \theta _2}{\cos (\theta _1-\theta _2)}\nonumber\\
&&C=-\frac{\sin \theta _1}{\cos (\theta _1-\theta _2)};\;\;\;\;\;D=\frac{\cos \theta _1}{\cos (\theta _1-\theta _2)}.
\end{eqnarray}
Therefore the Jacobian corresponding to this change of variables is
\begin{eqnarray}
\frac{\partial (x_0',y_0')}{\partial (x_0,y_0)}=AD-BC=\frac{1}{\cos (\theta _1-\theta _2)}.
\end{eqnarray}

The distance of the point $(x',y')$ from the origin is given in terms of the coordinates in the non-orthogonal frame, by
\begin{eqnarray}\label{app1}
[d(x',y'|\theta _1,\theta _2)]^2=(x')^2+(y')^2+2x'y'\sin (\theta _1-\theta _2).
\end{eqnarray}

\subsection{The groupoid of transformations between the $U(\alpha,\beta;\theta _1, \theta _2)$}

The bifractional transformations do not form a group under multiplication.
It has been shown in \cite{ALV} that they are elements of  the semidirect product of the Heisenberg-Weyl group $HW$  by the
$SU(1,1)$ group of squeezing transformations: $HW\rtimes SU(1,1)$.
This has general elements of the type
\begin{alignat}{1}
S(a_1,a_2,a_3,a_4,a_5,a_6)=
\exp[a_1 \hat {x}^2+a_2\hat {p}^2+a_3(\hat {x}\hat {p}+\hat {p}\hat {x})+a_4\hat {x}+a_5\hat {p}+a_6{\bf 1}].
\end{alignat}
which depend on six parameters.
The operators $U(\alpha,\beta; \theta _{\alpha}, \theta _{\beta})$ depend only on four parameters, and they are
special cases of the operators $S(a_1,a_2,a_3,a_4,a_5,a_6)$. 

We describe the mathematical structure of $U(\alpha,\beta;\theta _1, \theta _2)$ with groupoids.
We consider the set of transformations 
\begin{eqnarray}\label{100}
{\cal B}=\{U(\alpha,\beta;\theta _1, \theta _2)\};\;\;\;\;(\alpha,\beta;\theta _1, \theta _2) \in {\mathbb R}\times {\mathbb R}\times {\mathfrak T} .
\end{eqnarray}
We also consider the map
\begin{eqnarray}\label{49}
{T}(\alpha, \beta ;\theta _1,\theta_2|\gamma, \delta ;\phi _1,\phi _2) :
U(\alpha,\beta;\theta _1, \theta _2)\;\; \rightarrow\;\;U(\gamma, \delta;\phi _1, \phi _2),
\end{eqnarray}
where
\begin{eqnarray}\label{3}
U(\gamma , \delta; \phi_1, \phi _2 )
=\frac{|\cos(\phi _1-\phi _2)|^{\frac{1}{2}}}{|\cos(\theta_1-\theta_2)|^{\frac{1}{2}}} \int d\alpha d\beta \Delta \left (\beta,\delta;\phi _2 -\theta _2\right )
\Delta \left (\alpha,\gamma;\phi _1-\theta _1\right )
U(\alpha , \beta ; \theta _1, \theta _2)
\end{eqnarray}
Eq.(\ref{3}) is a generalized version of Eq.(\ref{1}), which in the present notation is the map
\begin{eqnarray}
{T}{(\alpha ',\beta ';0,0|\alpha ,\beta ;\theta _1,\theta _2)} :U(\alpha ',\beta ';0,0)\;\; \rightarrow\;\;
U(\alpha, \beta;\theta_1,\theta_2),
\end{eqnarray}
The compatibility between the two, is shown in the first part of the proof
of proposition \ref{pro1} below.

We next consider the following notation for the composition
\begin{eqnarray}
&&[{T}{(\alpha,\beta;\theta _1,\theta_2|\gamma, \delta; \phi _1 \phi _2)} \circ {T}{(\gamma, \delta; \phi _1, \phi _2|\epsilon, \zeta; \psi _1, \psi _2)}][U(\alpha,\beta;\theta _1, \theta _2)]
\nonumber\\
&&={T}{(\gamma, \delta; \phi _1, \phi _2|\epsilon, \zeta; \psi _1, \psi _2)}[{T}{(\alpha,\beta;\theta _1,\theta_2|\gamma, \delta; \phi _1 \phi _2)} U(\alpha,\beta;\theta _1, \theta _2)]
\end{eqnarray}
The proposition below shows that the $T(\alpha,\beta;\theta _1,\theta_2|\gamma, \delta; \phi _1, \phi _2)$ form a groupoid:
\begin{proposition}\label{pro1}
The set $\{{T}{(\alpha,\beta;\theta _1,\theta_2|\gamma, \delta; \phi _1 ,\phi _2)}\}$ is a connected groupoid with base set ${\cal B}$ (in Eq.(\ref{100})), and with composition as multiplication.
The inverse of ${T}{(\alpha,\beta;\theta _1,\theta_2|\gamma, \delta; \phi _1 ,\phi _2)}$ is 
\begin{eqnarray}\label{400}
[{T}{(\alpha,\beta;\theta _1,\theta_2|\gamma, \delta; \phi _1, \phi _2)}]^{-1}={T}{(\gamma, \delta; \phi _1, \phi _2|\alpha,\beta;\theta _1,\theta_2)},
\end{eqnarray}
The left and right identities are ${T}{(\alpha,\beta;\theta _1,\theta_2|\alpha,\beta;\theta _1,\theta_2)}$ and
$T(\gamma, \delta; \phi _1 ,\phi _2|\gamma, \delta; \phi _1 ,\phi _2)$.

\end{proposition}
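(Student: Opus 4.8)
The plan is to deduce every groupoid axiom from a single algebraic input, namely the composition property (\ref{11}) of the kernels $\Delta$ together with the degenerate value $\Delta(x,y;0)=\delta(x-y)$; everything else is bookkeeping of the prefactors $|\cos(\cdot)|^{1/2}$. I would first dispose of the compatibility claim made just after Eq.(\ref{3}): specialising the source angles to $\theta_1=\theta_2=0$ in (\ref{3}) must return the defining formula (\ref{1}). Since $U(\alpha',\beta';0,0)=D(\beta',-\alpha')$ and $|\cos 0|=1$, this reduces to the change of integration variables $\alpha'\mapsto\beta''$, $\beta'\mapsto-\alpha''$ (whose Jacobian has modulus one) combined with the symmetry $\Delta(x,y;\theta)=\Delta(y,x;\theta)$ that is manifest from (\ref{2}). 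This pins down the meaning of the arrow $T$ and identifies (\ref{1}) as the arrow emanating from the base point $(\cdot\,;0,0)$.

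The core of the argument is the composition law
\begin{eqnarray}\label{complaw}
&&T(\alpha,\beta;\theta_1,\theta_2|\gamma,\delta;\phi_1,\phi_2)\circ T(\gamma,\delta;\phi_1,\phi_2|\epsilon,\zeta;\psi_1,\psi_2)\nonumber\\
&&=T(\alpha,\beta;\theta_1,\theta_2|\epsilon,\zeta;\psi_1,\psi_2).
\end{eqnarray}
I would insert the representation (\ref{3}) of $U(\gamma,\delta;\phi_1,\phi_2)$ into the representation of $U(\epsilon,\zeta;\psi_1,\psi_2)$, whereupon the two factors $|\cos(\phi_1-\phi_2)|^{1/2}$ cancel and the overall prefactor becomes $|\cos(\psi_1-\psi_2)|^{1/2}/|\cos(\theta_1-\theta_2)|^{1/2}$, exactly as required for the right-hand side of (\ref{complaw}). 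Performing the $\gamma$ and $\delta$ integrations and using (\ref{11}) with the symmetry of $\Delta$, the product $\Delta(\alpha,\gamma;\phi_1-\theta_1)\Delta(\gamma,\epsilon;\psi_1-\phi_1)$ collapses to $\Delta(\alpha,\epsilon;\psi_1-\theta_1)$, and similarly the $\delta$ integration yields $\Delta(\beta,\zeta;\psi_2-\theta_2)$. The surviving integral is precisely the kernel of $T(\alpha,\beta;\theta_1,\theta_2|\epsilon,\zeta;\psi_1,\psi_2)$. The composition is defined only when the target $(\gamma,\delta;\phi_1,\phi_2)$ of the first arrow equals the source of the second, i.e. $s_2$ of the first equals $s_1$ of the second, which is the partial-multiplication rule of a groupoid; associativity follows because the angle differences $(\psi_i-\phi_i)+(\phi_i-\theta_i)$ add associatively in (\ref{11}).

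The remaining assertions are immediate corollaries of (\ref{complaw}). Putting $(\epsilon,\zeta;\psi_1,\psi_2)=(\alpha,\beta;\theta_1,\theta_2)$ gives
\[
T(\alpha,\beta;\theta_1,\theta_2|\gamma,\delta;\phi_1,\phi_2)\circ T(\gamma,\delta;\phi_1,\phi_2|\alpha,\beta;\theta_1,\theta_2)=T(\alpha,\beta;\theta_1,\theta_2|\alpha,\beta;\theta_1,\theta_2),
\]
while the reversed product produces $T(\gamma,\delta;\phi_1,\phi_2|\gamma,\delta;\phi_1,\phi_2)$. Because $\Delta(x,y;0)=\delta(x-y)$ makes $T(\alpha,\beta;\theta_1,\theta_2|\alpha,\beta;\theta_1,\theta_2)$ act as the identity on $U(\alpha,\beta;\theta_1,\theta_2)$, these two arrows are the left and right identities $L_T$ and $R_T$, and the inverse formula (\ref{400}) follows at once. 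Connectedness is then free: any two base points $U(\alpha,\beta;\theta_1,\theta_2)$ and $U(\gamma,\delta;\phi_1,\phi_2)$ are joined by the explicit arrow $T(\alpha,\beta;\theta_1,\theta_2|\gamma,\delta;\phi_1,\phi_2)$.

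I expect the main obstacle to be technical rather than conceptual: ensuring that all angle arguments stay inside the fundamental domain $\mathfrak T$ and never reach the excluded lines $\mathcal L$ where $\cos(\theta_1-\theta_2)=0$, so that both the prefactors and the identity (\ref{11}) remain valid throughout the $\gamma,\delta$ integrations, and handling the $\delta$-function limits so that the diagonal arrows genuinely act as identities in the distributional sense. The one point demanding real care is that the \emph{moduli} $|\cos(\cdot)|^{1/2}$, rather than the signed cosines, must compose correctly as the angle differences are concatenated.
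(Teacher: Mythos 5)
Your proposal is correct and follows essentially the same route as the paper: the composition law is established by inserting one integral representation of $U$ into the other and collapsing the kernel product via Eq.(\ref{11}) (with cancellation of the $|\cos(\phi_1-\phi_2)|^{1/2}$ prefactors), the diagonal arrows act as identities because $\Delta(x,y;0)=\delta(x-y)$, the inverse formula (\ref{400}) follows from these, and connectedness comes from the explicit arrow joining any two base points. Your preliminary check that Eq.(\ref{3}) with source angles $(0,0)$ reduces to Eq.(\ref{1}) is exactly the compatibility the paper folds into part (1) of its proof, so there is no substantive difference in approach.
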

\begin{proof}
The proof consists of the following three parts:
\begin{itemize}
\item[(1)]
We prove that the following compatibility relation holds
\begin{eqnarray}\label{4}
{T}{(\alpha,\beta;\theta _1,\theta_2|\gamma, \delta; \phi _1 ,\phi _2)} \circ {T}{(\gamma, \delta; \phi _1, \phi _2|\epsilon, \zeta; \psi _1, \psi _2)}=
{T}{(\alpha,\beta;\theta _1,\theta_2|\epsilon, \zeta; \psi _1, \psi _2)}.
\end{eqnarray}
We start with the relations
\begin{eqnarray}\label{1a}
&&{T}{(\alpha,\beta;\theta _1,\theta_2|\gamma, \delta; \phi _1 ,\phi _2)}[U(\alpha,\beta;\theta_1,\theta_2)]=U(\gamma,\delta;\phi _1,\phi _2)\nonumber \\&&=
\frac{|\cos{(\phi _1-\phi _2)}|^\frac{1}{2}}{|\cos{(\theta_1-\theta_2)}|^\frac{1}{2}} \int d\alpha d\beta \Delta(\beta,\delta;\phi _2-\theta _2) \Delta(\alpha,\gamma;\phi _1-\theta_1) U(\alpha,\beta;\theta_1,\theta_2),
\end{eqnarray}
and
\begin{eqnarray}\label{2a}
&&T(\gamma,\delta; \phi _1,\phi _2|\epsilon,\zeta ; \psi _1, \psi _2) [U(\gamma,\delta;\phi _1,\phi _2)]= U(\epsilon,\zeta;\psi_1,\psi_2)\nonumber \\&&=
\frac{|\cos{(\psi _1-\psi _2)}|^\frac{1}{2}}{|\cos{(\phi _1-\phi _2)}|^\frac{1}{2}} \int d\gamma d\delta \Delta(\delta,\zeta;\psi _2-\phi _2) \Delta(\gamma,\epsilon;\psi _1-\phi _1) U(\gamma,\delta;\phi _1,\phi _2).
\end{eqnarray}
Inserting Eq.($\ref{1a}$) into  Eq.($\ref{2a}$) we get
\begin{eqnarray}\label{3a}
&&T(\gamma,\delta; \phi _1,\phi _2|\epsilon,\zeta ; \psi _1, \psi _2)[{T}{(\alpha,\beta;\theta _1,\theta_2|\gamma, \delta; \phi _1 ,\phi _2)} [U(\alpha,\beta;\theta _{1}, \theta _{2})]]
\nonumber \\&&=\frac{|\cos{(\psi_1-\psi_2)}|^\frac{1}{2}}{|\cos{(\theta_1-\theta_2)}|^\frac{1}{2}} \int d\gamma d\delta \Delta(\delta,\zeta;\psi_2-\phi_2) \Delta(\gamma,\epsilon;\psi_1-\phi_1)\nonumber \\&& \times
 d\alpha d\beta \Delta(\beta,\delta;\phi_2-\theta_2) \Delta(\alpha,\gamma;\phi_1-\theta_1)U(\alpha,\beta;\theta_1,\theta_2).
\end{eqnarray}
The compatibility relation of Eq.(\ref{4}) holds because using Eq.(\ref{11}) we show that Eq.(\ref{3a}) reduces to
\begin{eqnarray}
&&T(\gamma,\delta; \phi _1,\phi _2|\epsilon,\zeta ; \psi _1, \psi _2)[{T}{(\alpha,\beta;\theta _1,\theta_2|\gamma, \delta; \phi _1 ,\phi _2)} [U(\alpha,\beta;\theta_1, \theta_2)]]
 \nonumber \\ &&=\frac{|\cos{(\psi_1-\psi_2)}|^\frac{1}{2}}{|\cos{(\theta_1-\theta_2)}|^\frac{1}{2}}\int d\alpha d\beta \Delta(\beta,\zeta;\psi_2-\theta_2) \Delta(\alpha,\epsilon;\psi_1-\theta_1) U(\alpha,\beta;\theta_1,\theta_2)\nonumber\\&&=
T(\alpha,\beta; \theta_1,\theta_2|\epsilon,\zeta;\psi_1,\psi_2 ) [U(\alpha,\beta;\theta _{1}, \theta _{2})].
\end{eqnarray}

\item[(2)]

For the left and right identities, we first point out that in the special case that $\phi _1=\theta_1$ and $\phi _2=\theta_2$, the
$\Delta \left (\beta,\delta;0\right )$ and $\Delta \left (\alpha,\gamma;0\right )$
are delta functions, and therefore in this case $\alpha=\gamma$ and $\beta =\delta$ and ${T}{(\alpha,\beta;\theta _1,\theta_2 |\alpha, \beta; \theta _1,\theta_2)}$ is the identity map.
We also show that
\begin{eqnarray}
T(\alpha,\beta;\theta _1,\theta_2|\gamma,\delta;\phi_1,\phi_2)\circ {T}{(\gamma,\delta;\phi_1,\phi_2|\alpha,\beta;\theta _1,\theta_2)} ={T}{(\alpha,\beta;\theta _1,\theta_2|\alpha,\beta;\theta _1,\theta_2)} \nonumber\\ 
{T}{(\gamma,\delta;\phi_1,\phi_2|\alpha,\beta;\theta _1,\theta_2)}\circ T(\alpha,\beta;\theta _1,\theta_2|\gamma,\delta;\phi_1,\phi_2)={T}{(\gamma,\delta;\phi_1,\phi_2|\gamma,\delta;\phi_1,\phi_2)}.
\end{eqnarray}
The inverse, is an involution:
\begin{eqnarray}\label{40}
\{[{T}{(\alpha,\beta;\theta _1,\theta_2|\gamma,\delta;\phi_1,\phi_2)}]^{-1}\}^{-1}={T}{(\alpha,\beta;\theta _1,\theta_2|\gamma,\delta;\phi_1,\phi_2)}.
\end{eqnarray}
\item[(3)]
The above two parts show that ${\cal T}$ is a groupoid.
In fact it is a connected groupoid because any two elements $U(\gamma , \delta; \phi_1,\phi_2 )$, $U(\alpha , \beta ; \theta _1, \theta _2)$ 
in the set ${\cal U}$, are related through Eq.(\ref{3}).
\end{itemize}
\end{proof}

\subsection{Marginal properties for $U(\alpha, \beta; \theta _1, \theta _2)$}\label{p22}

The proposition below summarizes the marginal properties of $U(\alpha, \beta; \theta _1, \theta _2)$:
\begin{proposition}\label{pro10}
\mbox{}
\begin{itemize}
\item[(1)]
Integration of $U(\alpha, \beta; \theta _1, \theta _2)$ with respect to $\alpha$ gives
\begin{eqnarray}\label{qq10}
\int d\alpha U(\alpha, \beta; \theta _1, \theta _2)
=|\cos(\theta_1-\theta_2)|^{\frac{1}{2}} \int d\alpha 'd\beta '\Delta \left (\beta,\alpha ';\theta _2\right )
\Delta \left (0,\beta ';\frac{\pi}{2}-\theta _1\right ) D(\alpha ', \beta ')
\end{eqnarray}
\item[(2)]
Integration of $U(\alpha, \beta; \theta _1, \theta _2)$ with respect to $\beta$ gives
\begin{eqnarray}
\int d\beta U(\alpha, \beta; \theta _1, \theta _2)
=|\cos(\theta_1-\theta_2)|^{\frac{1}{2}} \int d\alpha 'd\beta '\Delta \left (0,\alpha ';\frac{\pi}{2}-\theta _2\right )
\Delta \left (\alpha,-\beta ';\theta _1\right )
D(\alpha ', \beta ')
\end{eqnarray}
\item[(3)]
Integration of $U(\alpha, \beta; \theta _1, \theta _2)$ with respect to both $\alpha$ and $\beta$ gives
\begin{eqnarray}\label{q23}
\int d\alpha d\beta U(\alpha, \beta; \theta _1, \theta _2)
= U(0,0; \frac{\pi}{2}-\theta _1, \frac{\pi}{2}- \theta _2)
\end{eqnarray}
\end{itemize}
\end{proposition}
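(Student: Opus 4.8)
The plan is to reduce all three statements to a single computation, namely the one–variable marginal of the fractional kernel $\int d\alpha\,\Delta(\alpha,y;\theta)$. Once this is available, parts (1) and (2) are literally the same identity applied to the two different integration variables: in the defining integral (\ref{1}) the variable $\alpha$ enters only through the factor $\Delta(\alpha,-\beta';\theta_1)$ and the variable $\beta$ only through $\Delta(\beta,\alpha';\theta_2)$, whereas the prefactor $|\cos(\theta_1-\theta_2)|^{1/2}$ and the operator $D(\alpha',\beta')$ are independent of both. Hence each marginalization commutes with the $d\alpha'\,d\beta'$ integration and acts on exactly one $\Delta$ factor, leaving the rest of the integrand untouched.

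For the key marginal I would avoid grinding out the Gaussian integral and instead use the semigroup law (\ref{11}). The special value $\Delta(x,y;\pi/2)=(2\pi)^{-1/2}\exp(ixy)$ degenerates to the constant $(2\pi)^{-1/2}$ once one argument is set to zero; inserting this constant kernel into (\ref{11}) converts a bare marginal of $\Delta$ into a single fractional kernel with one argument equal to $0$ and with its angle shifted by $\pm\pi/2$. Thus $\int d\alpha\,\Delta(\alpha,y;\theta)$ equals, up to a factor $(2\pi)^{1/2}$, a kernel $\Delta(0,y;\theta+\tfrac{\pi}{2})$ evaluated at zero spatial argument. The limiting checks $\theta\to0$, which must give $\int\delta(\alpha-y)\,d\alpha=1$, and $\theta=\pi/2$, which must give $(2\pi)^{1/2}\delta(y)$, are what pin down both the normalization and the precise residual angle. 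Feeding this back into (\ref{1}) produces (\ref{qq10}) and its $\beta$–analogue, with the surviving kernel carrying a relocated angle and a zero spatial argument.

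Part (3) then follows structurally by applying both marginalizations at once: the integrand of the defining expression collapses to the form $\Delta(0,\alpha';\,\cdot\,)\,\Delta(0,-\beta';\,\cdot\,)\,D(\alpha',\beta')$, which is precisely the defining integral (\ref{1}) of a bifractional operator $U(0,0;\,\cdot\,,\,\cdot\,)$ with both spatial arguments set to zero and with the two angles relocated to $\pi/2-\theta_1$ and $\pi/2-\theta_2$. The only point to verify for the prefactor is the cosine identity $\cos\bigl((\pi/2-\theta_1)-(\pi/2-\theta_2)\bigr)=\cos(\theta_1-\theta_2)$, which shows that $|\cos(\theta_1-\theta_2)|^{1/2}$ is exactly the normalization demanded by the definition of $U(0,0;\pi/2-\theta_1,\pi/2-\theta_2)$. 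This delivers (\ref{q23}) with no further computation.

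The hard part will be the bookkeeping inside the key marginal. One must track the $\pm\pi/2$ angle shift together with the branch of the square–root prefactor $[(1+i\cot\theta)/2\pi]^{1/2}$ and the compensating factor $(2\pi)^{1/2}$, so that the surviving kernel lands exactly on $\Delta(0,\cdot\,;\pi/2-\theta)$ and the normalizations telescope to unity; this is the step where a sign or branch slip would be invisible on the degenerate checks at $\theta=0,\pi/2$ yet fatal in general, so it deserves care. Everything else — passing the marginal through the $d\alpha'\,d\beta'$ integral, and recognizing the doubly–integrated integrand as a displaced-angle $U$ — is immediate once this lemma is secured.
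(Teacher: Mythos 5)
Your overall architecture is the same as the paper's: reduce everything to the one--variable marginal $\int d\alpha\,\Delta(\alpha,y;\theta)$, obtain parts (1) and (2) by applying it under the $d\alpha'\,d\beta'$ integral, and get part (3) from a second marginalization together with $\cos\bigl((\tfrac{\pi}{2}-\theta_1)-(\tfrac{\pi}{2}-\theta_2)\bigr)=\cos(\theta_1-\theta_2)$ (the paper does exactly this by integrating Eq.(\ref{qq10}) over $\beta$). Where you genuinely differ is the key computation. The paper grinds out the Gaussian integral, obtaining
\[
\int d\alpha\,\Delta(\alpha,-\beta';\theta_1)=[1-i\tan\theta_1]^{1/2}\exp\left[\frac{i\beta'^2\tan\theta_1}{2}\right],
\]
and then names this $\Delta(0,\beta';\tfrac{\pi}{2}-\theta_1)$ by invoking the auxiliary relation Eq.(\ref{newK}). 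You instead set $x=0$, $\theta_1=\tfrac{\pi}{2}$ in the semigroup law Eq.(\ref{11}), so that the constant kernel $\Delta(0,\cdot;\tfrac{\pi}{2})=(2\pi)^{-1/2}$ converts the bare marginal into a single kernel: $\int d\alpha\,\Delta(\alpha,y;\theta)=(2\pi)^{1/2}\Delta(0,y;\theta+\tfrac{\pi}{2})$. This is cleaner (it reuses an identity the paper has already established instead of redoing a Fresnel integral), and expanding $\cot(\theta+\tfrac{\pi}{2})=-\tan\theta$ in Eq.(\ref{2}) shows it is the same closed form as the paper's Gaussian result. So both proofs compute the same object.

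The catch is the step you defer as ``bookkeeping'': it does not telescope the way you hope, and your instinct that this is the dangerous point is correct, but the resolution is different from what you anticipate. Under the literal definition Eq.(\ref{2}), $\Delta(0,y;\tfrac{\pi}{2}-\theta)=[(1+i\tan\theta)/2\pi]^{1/2}\exp[-iy^2\tan\theta/2]$, which is the complex conjugate of your result divided by $(2\pi)^{1/2}$; no branch choice makes them equal. Moreover, contrary to your remark that the degenerate checks are blind to such slips, your own $\theta\to 0$ check detects the normalization: the true marginal is $1$, while the proposition's kernel $\Delta(0,y;\tfrac{\pi}{2})$ equals $(2\pi)^{-1/2}$. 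Similarly, at $\theta_1=\theta_2=0$ part (3) reads $\int d\alpha\,d\beta\,D(\beta,-\alpha)=2\pi\,\Pi(0,0)$ by Eq.(\ref{7}), i.e.\ it is off from Eq.(\ref{q23}) by $2\pi$. The source of the mismatch is not your method but the paper's Eq.(\ref{newK}), which is inconsistent with the definition Eq.(\ref{2}) by precisely this conjugation and factor $(2\pi)^{1/2}$. To finish, you must therefore either state the lemma in the definition--consistent form $\int d\alpha\,\Delta(\alpha,y;\theta)=(2\pi)^{1/2}\Delta(0,y;\theta+\tfrac{\pi}{2})$, in which case parts (1)--(3) come out with angles $\theta_{1,2}+\tfrac{\pi}{2}$ and prefactors $(2\pi)^{1/2}$, $(2\pi)^{1/2}$ and $2\pi$ (part (3) becoming $\int d\alpha\,d\beta\,U=2\pi\,U(0,0;\theta_1+\tfrac{\pi}{2},\theta_2+\tfrac{\pi}{2})$), or explicitly adopt Eq.(\ref{newK}) as the paper's convention for the symbol $\Delta(0,\cdot;\tfrac{\pi}{2}-\theta)$, after which your derivation reproduces Eqs.(\ref{qq10}) and (\ref{q23}) verbatim. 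With that one point settled, your plan goes through.
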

\begin{proof}
\mbox{}
\begin{itemize}
\item[(1)]
\begin{eqnarray}\label{tyu}
\int U(\alpha,\beta;\theta_1, \theta_2) d\alpha &=& |\cos(\theta_1 - \theta_2)|^{\frac{1}{2}} \left[ \frac{1+i\cot\theta_1}{2\pi}\right]^{\frac{1}{2}} \int d\alpha 'd\beta '\Delta \left (\beta,\alpha ';\theta _2\right ) \exp\left[\frac{-i\beta'^2\cot\theta_1}{2}\right]
 D(\alpha ', \beta ') \nonumber \\
 &&\int d\alpha \exp\left[-\alpha^2 \left(\frac{i\cot\theta_1}{2} \right) -\alpha(\frac{i\beta'}{\sin\theta_1})\right] \nonumber \\
 &=&  |\cos(\theta_1 - \theta_2)|^{\frac{1}{2}} \left[1-i\tan\theta_1\right]^{\frac{1}{2}} \int d\alpha 'd\beta '\Delta \left (\beta,\alpha ';\theta_2\right ) \exp\left[\frac{i\beta'^2\tan\theta_1}{2}\right]
 D(\alpha ', \beta ') \nonumber \\
 &=&  |\cos(\theta_1 - \theta_2)|^{\frac{1}{2}} \int d\alpha 'd\beta' \Delta \left (\beta,\alpha ';\theta_2\right ) \Delta \left (0,\beta ';\frac{\pi}{2} - \theta _1\right )
 D(\alpha ', \beta ' ) 
\end{eqnarray}
We have used here the relation
\begin{eqnarray}\label{newK}
\Delta \left (0,\beta ';\frac{\pi}{2} - \theta \right ) = [1-i\tan\theta]^{\frac{1}{2}} \exp\left[ \frac{i\beta'^2\tan\theta}{2}\right]
\end{eqnarray}
\item[(2)]
This is proved in a similar way to the above.
\item[(3)]
We integrate Eq.(\ref{qq10}) with respect to $\beta$ and we prove Eq.(\ref{q23}).
\end{itemize}
\end{proof}

\section{Bifractional coherent states}\label{AA}
Acting on the $U(\alpha, \beta; \theta_1,\theta_2)$ on the vacuum $\ket{0}$ we get the 'bifractional coherent states': 
\begin{eqnarray}\label{cc}
\ket{\alpha, \beta; \theta_1,\theta_2} = U(\alpha, \beta; \theta_1,\theta_2)\ket{0}
\end{eqnarray}

We introduce another type of bifractional coherent states, which we 
call `R-bifractional coherent states\rq{}, and denote with the index R:
\begin{eqnarray}\label{dd}
\ket{\alpha, \beta; \theta_1,\theta_2}_R=U(0,0; \theta_1,\theta_2)\ket{\alpha, \beta}
\end{eqnarray}
 They are eigenstates of the annihilation operator
\begin{eqnarray}
b(\theta_1,\theta_2)=U(0,0; \theta_1,\theta_2)a[U(0,0; \theta_1,\theta_2)]^{\dagger}
\end{eqnarray}
and consequently they obey the resolution of the identity
\begin{eqnarray}\label{res2}
\frac{1}{2\pi}\int d\alpha d\beta \ket{\alpha, \beta; \theta_1,\theta_2}_{R\;R} \bra {\alpha, \beta; \theta_1,\theta_2}={\bf 1}.
\end{eqnarray}
The proposition below relates the bifractional coherent states in Eqs.(\ref{cc}), (\ref{dd}).
\begin{proposition}
\begin{eqnarray}\label{35}
&&\ket {\alpha, \beta; \theta_1,\theta_2 }_R= \ket {-\beta \cos \theta_1-\alpha \sin \theta_1, \alpha \cos \theta_2 - \beta \sin \theta_2; \theta_1, \theta_2}
\exp(iX)\nonumber\\
&&X=\frac{1}{4}(\beta ^2-\alpha ^2)[\sin (2\theta _2)-\sin (2 \theta _1)]+\alpha \beta (\cos ^2\theta _1-\cos ^2\theta _2)
\end{eqnarray}
\end{proposition}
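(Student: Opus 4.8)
The plan is to prove the identity by reducing both sides to the \emph{same} integral over displacement operators and then matching the integrands pointwise, so that no appeal to the (over)completeness of the coherent states is required. I write the $R$-state of Eq.(\ref{dd}) as $\ket{\alpha,\beta;\theta_1,\theta_2}_R=U(0,0;\theta_1,\theta_2)D(\alpha,\beta)\ket{0}$, using that $\ket{\alpha,\beta}=D(\alpha,\beta)\ket{0}$ is the standard coherent state, and I insert the defining integral (\ref{1}) for $U(0,0;\theta_1,\theta_2)$, whose kernel is $\Delta(0,\alpha';\theta_2)\Delta(0,-\beta';\theta_1)$.

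First I would push $D(\alpha,\beta)$ through the integral with the Weyl multiplication rule $D(\alpha',\beta')D(\alpha,\beta)=D(\alpha'+\alpha,\beta'+\beta)\exp[i(\beta'\alpha-\alpha'\beta)]$, and then change integration variables so that the displacement argument becomes $(\alpha',\beta')$ again. A short computation shows that the shift-induced cross terms in the Weyl phase cancel, leaving the coefficient of $D(\alpha',\beta')\ket{0}$ equal to $\Delta(0,\alpha'-\alpha;\theta_2)\,\Delta(0,\beta-\beta';\theta_1)\,\exp[i(\beta'\alpha-\alpha'\beta)]$. In parallel I write the right-hand side $\ket{\gamma,\delta;\theta_1,\theta_2}$ from Eqs.(\ref{cc}) and (\ref{1}), whose coefficient of $D(\alpha',\beta')\ket{0}$ is $\Delta(\delta,\alpha';\theta_2)\,\Delta(\gamma,-\beta';\theta_1)$.

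The heart of the argument is then to show that these two coefficient functions agree up to the constant phase $e^{iX}$. Both are Gaussians in $(\alpha',\beta')$ with identical prefactors $[(1+i\cot\theta_j)/2\pi]^{1/2}$, so it only remains to compare the exponents term by term. The quadratic terms in $\alpha'$ and in $\beta'$ match automatically; the terms linear in $\alpha'$ and in $\beta'$ force $\delta=\alpha\cos\theta_2-\beta\sin\theta_2$ and $\gamma=-\beta\cos\theta_1-\alpha\sin\theta_1$, reproducing the arguments displayed in the statement; and the remaining $(\alpha',\beta')$-independent term yields $iX=-\tfrac{i}{2}\cot\theta_2(\alpha^2-\delta^2)-\tfrac{i}{2}\cot\theta_1(\beta^2-\gamma^2)$.

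The last and only delicate step is the trigonometric simplification: substituting the expressions for $\gamma,\delta$ and using $\cot\theta\,\sin^2\theta=\tfrac12\sin 2\theta$ and $\cot\theta\,\sin 2\theta=2\cos^2\theta$ collapses this constant into the stated $X=\tfrac14(\beta^2-\alpha^2)[\sin 2\theta_2-\sin 2\theta_1]+\alpha\beta(\cos^2\theta_1-\cos^2\theta_2)$. I expect the bookkeeping of this Gaussian exponent to be the main place where care is required, in particular keeping the signs straight in the Weyl phase and exploiting the evenness of $\Delta(0,y;\theta)$ in $y$; I would also note that the $\theta_j=0$ cases are recovered as the delta-function limits of the kernel, so that the $\cot$-expressions there are only formal.
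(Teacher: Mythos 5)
Your proposal is correct and follows essentially the same route as the paper's proof: both insert the integral representation Eq.(\ref{1}) into $U(0,0;\theta_1,\theta_2)D(\alpha,\beta)\ket{0}$, apply the Weyl multiplication rule, shift the integration variables, and identify the resulting Gaussian kernel with $e^{iX}$ times the kernel of $U(-\beta\cos\theta_1-\alpha\sin\theta_1,\,\alpha\cos\theta_2-\beta\sin\theta_2;\theta_1,\theta_2)$ acting on the vacuum. Your explicit term-by-term matching of the exponents (quadratic, linear, and constant parts, which fixes the displaced arguments and the phase $X$) is precisely the step the paper compresses into ``changing variables and combining similar terms.''
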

\begin{proof}
We will prove that
\begin{eqnarray}\label{aa}
U(0,0; \theta_1,\theta_2) D(\alpha, \beta) &=& U(-\beta\cos\theta_1-\alpha\sin\theta_1, \alpha\cos\theta_2 - \beta\sin\theta_2; \theta_1,\theta_2)\exp(iX)
\end{eqnarray}
We write $U(0,0; \theta_1,\theta_2)D(\alpha, \beta)$ as
\begin{eqnarray}
U(0,0; \theta_1,\theta_2)D(\alpha, \beta) = |\cos(\theta_1-\theta_2)|^{1/2} &&\int d \alpha' d \beta' \exp\left[-\frac{i}{2}(\alpha'^2\cot\theta_2 + \beta'^2\cot\theta_1) \right] \nonumber \\
&& \times D(\alpha + \alpha', \beta+\beta') \exp[i\alpha\beta' -i\alpha' \beta] 
\end{eqnarray}

Changing variables and combining similar terms we show that,
\begin{eqnarray}
U(0,0; \theta_1,\theta_2)\ket{\alpha, \beta} = &&|\cos(\theta_1-\theta_2)|^{1/2}\exp(iX) \int d \gamma d \lambda \;\; \Delta(\gamma, \alpha\cos\theta_2-\beta\sin\theta_2; \theta_2)\nonumber\\& &\times \Delta(-\lambda, -\beta\cos\theta_1-\alpha\sin\theta_1; \theta_1) \ket{\gamma, \lambda}
\end{eqnarray}
This proves Eq($\ref{aa}$) and Eq.(\ref{35}).
\end{proof}

We will use the notation
\begin{eqnarray}\label{rrr}
&&\ket{\alpha, \beta; \theta _1, \theta _2}=\ket{w(\theta _1, \theta _2)};\;\;\;\;
w(\theta _1, \theta _2)= \frac{\alpha e_2 + i\beta e_1}{\cos(\theta_1-\theta_2)}\nonumber\\
&&e_2 = i\exp(-i\theta_2);\;\;\;\;\;e_1 = i\exp(-i\theta_1)
\end{eqnarray}
$w(\theta _1, \theta _2)$ can be written in terms of the $A,B,C,D$ in Eq.(\ref{app}) as
\begin{eqnarray}
w(\theta _1, \theta _2)=\alpha (B+iA)-\beta(D+iC)
\end{eqnarray}
$e_1$, $e_2$ are such that  the `analyticity part\rq{} of the following proposition, which presents the properties of the
bifractional coherent states, holds. 
\begin{proposition}\label{P1}
\mbox{}
\begin{itemize}
\item[(1)][analyticity]
The
$\exp \left (E(w|w^*)\right )\ket{w( \theta _1, \theta _2)}$, where
\begin{eqnarray} \label{p2}
E(w|w^*)= \frac{1}{2} |w|^2-\frac{1}{2}b(w^*)^2;\;\;\;\;\;
b= \frac{1}{4}[\exp(-i2\theta_1) -\exp(-i2\theta_2) ];\;\;\;\;E(w|w^*)=[ E(w^*|w)]^*
\end{eqnarray}
depends only on $w$, and does not depend on $w^*$.

\item[(2)][Resolution of the identity]

\begin{eqnarray}\label{vb}
\frac{1}{2\pi \cos(\theta_1-\theta_2)}\int d\alpha d\beta \ket{\alpha, \beta; \theta_1,\theta_2} \bra{\alpha, \beta; \theta_1,\theta_2} = 1
\end{eqnarray}
This can also be written as:
\begin{eqnarray}\label{104}
 \int \frac{d^2w}{2\pi}\ket{w(\theta _1, \theta _2)}\bra{w(\theta _1, \theta _2)}={\bf 1}
\end{eqnarray}
\end{itemize}
\end{proposition}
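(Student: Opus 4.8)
The plan is to handle the two parts by different routes: analyticity via an explicit Gaussian evaluation of the coherent-state overlaps, and the resolution of the identity via a change of variables that reduces everything to the ordinary coherent-state completeness.

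For part (1) I would first compute $\langle z|\alpha,\beta;\theta_1,\theta_2\rangle$ for an arbitrary standard coherent state $\ket z$. Inserting the integral representation Eq.(\ref{1}) and using $\bra{z}D(\alpha',\beta')\ket{0} = \exp[-\tfrac12|z|^2 - \tfrac12(\alpha'^2+\beta'^2) + z^*(\alpha'+i\beta')]$, this overlap becomes a two-dimensional Gaussian integral over $(\alpha',\beta')$. The decisive simplification is that $\Delta(\beta,\alpha';\theta_2)$ contributes an $\alpha'^2$ term and $\Delta(\alpha,-\beta';\theta_1)$ a $\beta'^2$ term, with no $\alpha'\beta'$ cross term, so the integral factorizes into two elementary one-dimensional Gaussians with completed-square denominators $\tfrac12 + \tfrac{i}{2}\cot\theta_2$ and $\tfrac12 + \tfrac{i}{2}\cot\theta_1$. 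Carrying them out gives $\langle z|\alpha,\beta;\theta_1,\theta_2\rangle = (\text{const})\exp Q$, with $Q$ a quadratic form in $z^*,\alpha,\beta$.

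The next step is to re-express $Q$ through $w,w^*$ using the inverse of Eq.(\ref{rrr}), namely $\alpha = \tfrac{1}{2i}(e^{i\theta_1}w - e^{-i\theta_1}w^*)$ and $\beta = -\tfrac12(e^{i\theta_2}w + e^{-i\theta_2}w^*)$, and then to add $E(w|w^*)$. I expect the $\tfrac12 ww^*$ and $-\tfrac12 b(w^*)^2$ pieces of $E$ to cancel exactly the $ww^*$ and $(w^*)^2$ terms of $Q$, while the mixed term $z^*w^*$ carries a vanishing coefficient; these three cancellations are precisely what single out $e_1 = ie^{-i\theta_1}$, $e_2 = ie^{-i\theta_2}$ and $b = \tfrac14(e^{-2i\theta_1}-e^{-2i\theta_2})$. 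Once every $w^*$-dependent term has disappeared, $e^{E(w|w^*)}\langle z|\alpha,\beta;\theta_1,\theta_2\rangle$ is holomorphic in $w$ for each $z$; since the coherent states are complete, reconstructing $\ket{w}=\frac{1}{2\pi}\int d^2z\,\ket{z}\langle z|w\rangle$ and differentiating under the integral yields $\partial_{w^*}[e^{E}\ket{w}]=0$. I expect the main obstacle to be exactly this algebra: the claim stands or falls on verifying, with the complex prefactors and trigonometric denominators kept straight, that the coefficient of $z^*w^*$ vanishes identically and that of $(w^*)^2$ equals $\tfrac12 b$.

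For part (2) I would use the decomposition already available from Eq.(\ref{aa}): applying $U(0,0;\theta_1,\theta_2)D(\alpha_0,\beta_0)=U(\alpha,\beta;\theta_1,\theta_2)e^{iX}$ to $\ket 0$ gives $\ket{\alpha,\beta;\theta_1,\theta_2}=e^{-iX}U(0,0;\theta_1,\theta_2)\ket{\alpha_0,\beta_0}$, with $\alpha=-\beta_0\cos\theta_1-\alpha_0\sin\theta_1$, $\beta=\alpha_0\cos\theta_2-\beta_0\sin\theta_2$ and $\ket{\alpha_0,\beta_0}=D(\alpha_0,\beta_0)\ket 0$ an ordinary coherent state. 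Since $X$ is real the phase drops out of the projector, so $\ket{\alpha,\beta;\theta_1,\theta_2}\bra{\alpha,\beta;\theta_1,\theta_2}=U(0,0;\theta_1,\theta_2)\ket{\alpha_0,\beta_0}\bra{\alpha_0,\beta_0}U(0,0;\theta_1,\theta_2)^\dagger$. Changing variables $(\alpha,\beta)\to(\alpha_0,\beta_0)$ in Eq.(\ref{vb}) contributes the Jacobian of this linear map, which is exactly $|\cos(\theta_1-\theta_2)|$ and cancels the prefactor $1/\cos(\theta_1-\theta_2)$; pulling the unitary outside the integral then reduces the statement to the ordinary coherent-state resolution of the identity (the same completeness underlying Eq.(\ref{res2})), and $UU^\dagger={\bf 1}$ closes it. Finally, Eq.(\ref{104}) follows from Eq.(\ref{vb}) by the relation $d^2w=d\alpha\,d\beta/\cos(\theta_1-\theta_2)$ implied by Eq.(\ref{rrr}).
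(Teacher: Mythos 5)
Your proposal is correct, but its two halves sit differently relative to the paper. For part (2) you essentially reproduce the paper's argument: the paper inserts Eq.(\ref{35}) into the R-state resolution Eq.(\ref{res2}) (which holds precisely because the R-states are unitary images $U(0,0;\theta_1,\theta_2)\ket{\alpha,\beta}$ of ordinary coherent states) and then changes variables; your use of Eq.(\ref{aa}), the dropping of the real phase $X$, the Jacobian $\cos(\theta_1-\theta_2)$ of the linear map $(\alpha_0,\beta_0)\mapsto(\alpha,\beta)$, and the final substitution $d^2w=d\alpha\,d\beta/\cos(\theta_1-\theta_2)$ are exactly the same steps, with Eq.(\ref{res2}) merely unpacked into its underlying unitary-conjugation argument. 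For part (1), however, your route is genuinely different. The paper never evaluates a Gaussian integral in closed form: it writes $\exp(E(w|w^*))\ket{w(\theta_1,\theta_2)}$ as $|\cos(\theta_1-\theta_2)|^{1/2}\int d^2\zeta\, A(w,\zeta)\exp(\zeta a^\dagger)\ket{0}$, invokes analyticity of the Bargmann kernel $\exp(\zeta a^\dagger)\ket{0}$, integrates by parts in $\zeta$, and then establishes the transfer identity $\partial_{w^*}A(w,\zeta)=\partial_{\zeta^*}A(w,\zeta)$; comparing coefficients in that first-order identity is what fixes $e_1$, $e_2$ and $b$, and the identity immediately kills the $w^*$-derivative of the whole integral. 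You instead evaluate the factorized Gaussian overlap $\braket{z|\alpha,\beta;\theta_1,\theta_2}$ explicitly (your formula for $\bra{z}D(\alpha',\beta')\ket{0}$ and your inverse of Eq.(\ref{rrr}) are both right), re-express the exponent in $w,w^*$, and fix $e_1,e_2,b$ by demanding that the $z^*w^*$, $ww^*$ and $(w^*)^2$ terms cancel against $E(w|w^*)$; analyticity of the vector $\exp(E)\ket{w}$ then follows from analyticity of all scalar overlaps via completeness of the $\ket{z}$ (differentiating under the integral in the reconstruction formula deserves a word of justification, but is standard). Your version costs heavier complex Gaussian algebra but buys the explicit kernel $\braket{z|w(\theta_1,\theta_2)}$, which would essentially also deliver the overlap formula of proposition \ref{P2} for free; the paper's derivative-transfer trick avoids all closed-form integration at the price of being less explicit about what the states look like.
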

\begin{proof}
\mbox{}
\begin{itemize}
\item[(1)]
We write the 
$\exp \left (-\frac{1}{2}b(w^*)^2+\frac{1}{2}|w|^2\right )\ket{w(\theta _1, \theta _2)}$ as
\begin{eqnarray}
&&\exp \left (-\frac{1}{2}b(w^*)^2+\frac{1}{2}|w|^2]\right )\ket{w(\theta _1, \theta _2)}=
|\cos(\theta_1-\theta_2)|^{\frac{1}{2}} \int d^2\zeta A(w,\zeta)\exp(\zeta a^\dagger) \ket{0}\nonumber\\
&&A(w,\zeta)=\Delta \left (\mu _2,\zeta _R;\theta _2\right )
\Delta \left (\mu _1,-\zeta _I;\theta _1\right ) \exp\left [ -\frac{1}{2}b(w^*)^2+ \frac{1}{2} |w|^2-\frac{1}{2}|\zeta|^2\right ]\nonumber\\
&&\mu _1=\frac{1}{2}(e_1^* w + e_1 w^*) \nonumber \\
&&\mu _2= \frac{1}{2i}(e_2^* w - e_2 w^*)
\end{eqnarray}

Analyticity of coherent states gives
\begin{eqnarray}
|\cos(\theta_1-\theta_2)|^{\frac{1}{2}}  \int d^2\zeta A(w,\zeta)\frac{\partial }{\partial \zeta^*}[\exp(\zeta a^\dagger) \ket{0}]=0.
\end{eqnarray}
and integration by parts gives
\begin{eqnarray}\label{29}
|\cos(\theta_1-\theta_2)|^{\frac{1}{2}}  \int d^2\zeta \left [\frac{\partial }{\partial \zeta^*}A(w,\zeta)\right ]\exp(\zeta a^\dagger) \ket{0}=0.
\end{eqnarray}
We will prove that
\begin{eqnarray}\label{xc}
\frac{\partial }{\partial w^*}A(w,\zeta)=\frac{\partial}{\partial \zeta^*}A(w,\zeta).
\end{eqnarray}
Eq.(\ref{xc}) gives
\begin{eqnarray}
\frac{\partial}{\partial w^*} A(w,\zeta) &= & A(w,\zeta)  \left[-\frac{i \cot\theta_2}{4}(|e_2|^2 w - {e_2}^2 w^*) - \frac{e_2 \zeta_R }{2\sin\theta_2}\right] \nonumber \\
& + & A(w,\zeta) \left[ - \frac{i\cot\theta_1}{4}(|e_1|^2 w + e_1^2 w^*) -\frac{i e_1 \zeta_I}{2\sin\theta_1} \right] \nonumber \\
 &+ & A(w,\zeta) \left(-bw^* + \frac{1}{2}w\right ) 
\end{eqnarray}
and also,
\begin{eqnarray}
\frac{\partial}{\partial \zeta^*} A(w,\zeta) = &&A(w,\zeta)  \left[ -\frac{i e_2 \zeta_R}{2 \sin\theta_2} + \frac{1}{4\sin\theta_2}(e_2^* w - e_2 w^*)\right ]  \nonumber \\ 
  &&+A(w,\zeta)  \left[-\frac{i e_1 \zeta_I}{2 \sin\theta_1} + \frac{1}{4 \sin\theta_1}(e_1^* w + e_1 w^*)\right ]
\end{eqnarray}
Comparing coefficients we find the values of $e_1, e_2$ given in Eq(\ref{rrr}) and the value of $b$ given in Eq.(\ref{p2}).

We now insert Eq.(\ref{xc}) into Eq.(\ref{29}) and prove that
\begin{eqnarray}
\frac{\partial}{\partial w^*}\left [\exp \left (-\frac{1}{2}b(w^*)^2+ \frac{1}{2} |w|^2\right )\ket{w( \theta _1, \theta _2)}\right ]=0
\end{eqnarray}
\item[(2)]
Inserting Eq.(\ref{35}) into Eq(\ref{res2}) we prove the resolution of identity in Eq.(\ref{vb}).
Changing variables from $\alpha, \beta$ to $w(\theta _1, \theta _2)$ using Eq.(\ref{rrr}), we prove 
Eq.(\ref{104}).
\end{itemize}
\end{proof}

The following proposition, gives the overlap of two bifractional coherent states.
The square of the absolute value of this overlap is given in terms of a distance.
\begin{proposition}\label{P2}
\mbox{}
\begin{itemize}
\item[(1)]
The overlap of two of these coherent states is
\begin{eqnarray}\label{bas}
\braket{w(\theta _1, \theta _2) | v(\theta _1, \theta _2)} &= & \exp \left[-\frac{1}{2}|w(\theta _1, \theta _2)|^2-\frac{1}{2}|v(\theta _1, \theta _2)|^2 + w^*(\theta _1, \theta _2)v(\theta _1, \theta _2) \right] \nonumber \\
&\times & \exp \left[i\left[b^*w(\theta _1, \theta _2)^2 + b{w^*(\theta _1, \theta _2)}^2\right] \tan(\theta_1-\theta_2) \right] \nonumber \\
&\times & \exp \left[-i\left[b^*v(\theta _1, \theta _2)^2 + b{v^*(\theta _1, \theta _2)}^2\right]\tan(\theta_1-\theta_2) \right]
\end{eqnarray}
The last two factors are the `correction\rq{} to the usual result for the overlap of two coherent states.

\item[(2)]
\begin{eqnarray}\label{e2a}
|\langle {w(\theta _1, \theta _2)}\ket{v(\theta _1, \theta _2)}|^2=
\exp(-|w(\theta _1, \theta _2)-v(\theta _1, \theta _2)|^2)=
\exp\left\{ -\frac{[d(\alpha -\alpha ',\beta -\beta '|\theta_1,\theta_2)]^2 } {\cos^2(\theta_1-\theta_2)}\right \}
\end{eqnarray}
Here $d(\alpha -\alpha ',\beta -\beta '|\theta_1,\theta_2)$ is the distance discussed in the section \ref{159}.
The denominator ${\cos^2(\theta_1-\theta_2)}$ is a Jacobian as we change variables from an orthogonal to a non-orthogonal frame.
The $w(\theta _1, \theta _2)$  and $v(\theta _1, \theta _2)$ depend on $\alpha, \beta$ and $\alpha \rq{}, \beta \rq{}$ correspondingly, as in Eq.(\ref{rrr}).
\end{itemize}
\end{proposition}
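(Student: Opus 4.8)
The plan is to settle part (1) using the analyticity property just proved in proposition \ref{P1}, and then to read off part (2) from the modulus of that overlap. Throughout I abbreviate $w(\theta_1,\theta_2),v(\theta_1,\theta_2)$ to $w,v$. By proposition \ref{P1}(1) the vector $f(w)=\exp(E(w|w^*))\ket{w}$ is \emph{holomorphic} (it depends on $w$ only), with $E(w|w^*)=\tfrac12|w|^2-\tfrac12 b(w^*)^2$ and $b$ as in Eq.(\ref{p2}). I would therefore write $\ket{w}=\exp(-E(w|w^*))f(w)$, take the adjoint to get $\bra{w}=\exp(-\tfrac12|w|^2+\tfrac12 b^*w^2)\bra{f(w)}$, and factor the overlap as
\[
\braket{w|v}=\exp\left(-\tfrac12|w|^2+\tfrac12 b^*w^2\right)\exp\left(-\tfrac12|v|^2+\tfrac12 b(v^*)^2\right)G(w^*,v),
\]
where $G(w^*,v)=\langle f(w)|f(v)\rangle$ is entire, anti-holomorphic in the first slot and holomorphic in the second.

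The next step is to pin down $G$. Since $U(\alpha,\beta;\theta_1,\theta_2)$ is unitary the states are normalised, $\braket{w|w}=1$, which forces the anti-diagonal values $G(w^*,w)=\exp(|w|^2-\tfrac12 b^*w^2-\tfrac12 b(w^*)^2)$; a bi-entire function is determined by its values there (polarisation of the reproducing kernel $\sum_n\overline{c_n(w)}c_n(v)$), so $G(w^*,v)=\exp(w^*v-\tfrac12 b^*v^2-\tfrac12 b(w^*)^2)$. Substituting and collecting, the first exponential becomes the usual coherent-state factor $\exp(-\tfrac12|w|^2-\tfrac12|v|^2+w^*v)$, and the remaining $b$-dependent terms furnish the two correction factors of Eq.(\ref{bas}). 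The structural fact I would flag explicitly is that $b(w^*)^2=\overline{b^*w^2}$, so every $b$-dependent contribution is purely imaginary: the corrections are \emph{pure phases}. As a cross-check that also avoids the polarisation step, I would note that Eq.(\ref{35}) writes the R-states, whose overlaps are the standard ones by unitarity of $U(0,0;\theta_1,\theta_2)$, in terms of the $\ket{\alpha,\beta;\theta_1,\theta_2}$ up to the explicit phase $X$; tracking $X$ for each argument reproduces the same correction phases.

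For part (2) the first equality is then immediate: in $|\braket{w|v}|^2=\braket{w|v}\braket{v|w}$ the pure-phase corrections cancel against their conjugates, leaving $\exp(-|w|^2-|v|^2+w^*v+wv^*)=\exp(-|w-v|^2)$. For the second equality I would substitute $w=(\alpha e_2+i\beta e_1)/\cos(\theta_1-\theta_2)$ from Eq.(\ref{rrr}) (and $v$ with $\alpha',\beta'$), so that $w-v=[(\alpha-\alpha')e_2+i(\beta-\beta')e_1]/\cos(\theta_1-\theta_2)$. Using $|e_1|=|e_2|=1$ together with $e_1e_2^*-e_2e_1^*=e^{-i(\theta_1-\theta_2)}-e^{i(\theta_1-\theta_2)}=-2i\sin(\theta_1-\theta_2)$ gives $|(\alpha-\alpha')e_2+i(\beta-\beta')e_1|^2=(\alpha-\alpha')^2+(\beta-\beta')^2+2(\alpha-\alpha')(\beta-\beta')\sin(\theta_1-\theta_2)$, which is exactly $[d(\alpha-\alpha',\beta-\beta'|\theta_1,\theta_2)]^2$ of Eq.(\ref{app1}). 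Dividing by $\cos^2(\theta_1-\theta_2)$, the Jacobian of the orthogonal-to-nonorthogonal change of frame of section \ref{159}, yields the stated right-hand side.

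The main obstacle is entirely in part (1), namely fixing the holomorphic overlap $G$ and hence the exact \emph{phase} of $\braket{w|v}$: getting the modulus is easy, but the correction factors require careful $w^*$-versus-$w$ bookkeeping in passing to the adjoint and in the polarisation argument. Once $G$ is correct the remainder is routine, since the observation that the corrections are pure phases makes part (2) essentially automatic and reduces it to the short trigonometric identification with the distance of Eq.(\ref{app1}).
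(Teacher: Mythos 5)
Your route to part (1) is genuinely different from the paper's: the paper proves Eq.(\ref{bas}) by brute force, expanding both operators through Eq.(\ref{1}) and reducing the overlap to the four-fold Gaussian integral over $\braket{0|D(\gamma,\lambda)D(\gamma',\lambda')|0}$ of Eq.(\ref{kay}), whereas you combine the holomorphy statement of proposition \ref{P1}(1) with normalisation and uniqueness of the holomorphic extension off the anti-diagonal, avoiding Gaussian integration entirely. Your intermediate steps are sound: the factorisation through $f(w)$, the determination of $G(w^*,w)$ from $\braket{w|w}=1$, and the polarisation step are all correct, and they yield
\begin{eqnarray}
\braket{w|v}=\exp\left[-\tfrac{1}{2}|w|^2-\tfrac{1}{2}|v|^2+w^*v\right]
\exp\left[\tfrac{1}{2}\left(b^*w^2-b(w^*)^2\right)\right]
\exp\left[-\tfrac{1}{2}\left(b^*v^2-b(v^*)^2\right)\right],\nonumber
\end{eqnarray}
i.e.\ pure-phase corrections $\exp[\pm i\,\mathrm{Im}(b^*(\cdot)^2)]$. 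This is indeed corroborated by your cross-check via Eq.(\ref{35}): with $z=-w$ the phase there is $X=\mathrm{Im}(b^*w^2)$, in exact agreement. Part (2) of your argument coincides with the paper's (modulus plus the trigonometric identification with Eq.(\ref{app1})) and is fine; it is also insensitive to the issue below, since both candidate corrections are unimodular.

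The gap is the final identification. The factors you derive are \emph{not} the correction factors printed in Eq.(\ref{bas}): yours equal $\exp[\pm i\,\mathrm{Im}(b^*(\cdot)^2)]$, while the printed ones are $\exp[\pm i(b^*(\cdot)^2+b(\cdot^*)^2)\tan(\theta_1-\theta_2)]=\exp[\pm 2i\tan(\theta_1-\theta_2)\,\mathrm{Re}(b^*(\cdot)^2)]$. Writing $b^*=\tfrac{i}{2}\sin(\theta_1-\theta_2)e^{i(\theta_1+\theta_2)}$, your phase is proportional to $\mathrm{Re}[e^{i(\theta_1+\theta_2)}w^2]$ and the printed one to $\mathrm{Im}[e^{i(\theta_1+\theta_2)}w^2]$; these are different functions of $w$, agreeing identically only when $b=0$, i.e.\ $\theta_1=\theta_2$, where both are trivial. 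So the assertion that your terms ``furnish the two correction factors of Eq.(\ref{bas})'' is false as stated, and as a proof of the printed statement your argument fails at exactly that step. What you have actually established is a \emph{corrected} version of the proposition: note that the printed phases are incompatible with proposition \ref{P1}(1) itself, since multiplying Eq.(\ref{bas}) by $\exp[\tfrac{1}{2}|v|^2-\tfrac{1}{2}b(v^*)^2]$ leaves the residual anti-holomorphic factor $\exp[-b(v^*)^2(\tfrac{1}{2}+i\tan(\theta_1-\theta_2))]$, whereas your formula leaves $\exp[-\tfrac{1}{2}b^*v^2]$, which depends on $v$ alone as analyticity requires. A complete write-up should therefore not claim agreement with Eq.(\ref{bas}); it should state explicitly that the derivation (consistent with proposition \ref{P1} and Eq.(\ref{35}), both of which can be checked directly) produces the phases $\exp[\pm i\,\mathrm{Im}(b^*(\cdot)^2)]$, and flag the discrepancy with the printed formula.
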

\begin{proof}
\begin{itemize}
\item[(1)]
We prove Eq.(\ref{bas}) using Eq.(\ref{1}) for the two bi-fractional operators:
\begin{eqnarray}\label{kay}
&&\braket{\alpha,\beta;\theta_1,\theta_2|\alpha',\beta';\theta_1,\theta_2}=R\int   d\gamma d\lambda d\gamma' d\lambda'\nonumber\\
&&\times  \exp\left[-\frac{i}{2}(\gamma'^2-\gamma^2)\cot\theta_2 + i(\beta\gamma + \beta'\gamma')\csc\theta_2 -\frac{i}{2}(\lambda'^2-\lambda^2)\cot\theta_1 - i(\alpha\lambda + \alpha'\lambda')\csc\theta_1  \right] \nonumber \\ &&\times  \braket{0|D(\gamma,\lambda)D(\gamma',\lambda')|0}
\end{eqnarray}
where,
\begin{eqnarray}
R= [4\pi^2 \sin\theta_1 \sin\theta_2]^{-1} \exp{\left[\frac{i}{2}(\alpha^2\cot\theta_1 + \beta^2\cot\theta_2 -\alpha'^2\cot\theta_1 -\beta'^2\cot\theta_2)\right]}
\end{eqnarray}
Then we change notation using Eq.(\ref{rrr}). 

\item[(2)]
The first part  follows immediately from Eq.(\ref{bas}).
Then we use Eq.(\ref{app1}) in conjunction with Eq.(\ref{rrr}), and we prove the second part. 

\end{itemize}
\end{proof}

\section{Bifractional Wigner functions: interpolating between quantum noise and quantum correlations }\label{AA2}
For a density matrix $\rho$ we define the Wigner function $W(\alpha, \beta|\rho)$ and the Weyl function ${\widetilde W}(\alpha, \beta|\rho)$ as
\begin{eqnarray}
W(\alpha, \beta|\rho)={\rm Tr}[\rho \Pi(\alpha , \beta )];\;\;\;\;\;
{\widetilde W}(\alpha, \beta|\rho)={\rm Tr}[\rho D(\alpha , \beta )].
\end{eqnarray}
Using Eq.(\ref{7}) we show that the Wigner and Weyl functions are related through the two-dimensional Fourier transform:
\begin{eqnarray}\label{77}
&&W(\alpha, \beta|\rho)=\frac{1}{2\pi} \int  {\widetilde W}(\alpha ', \beta '|\rho)\exp\left [i(\beta \alpha '-\beta '\alpha)\right ] d\alpha ' d\beta '\nonumber\\
&&= \int d\alpha 'd\beta '\Delta \left (\beta, \alpha ';\frac{\pi}{2}\right )
\Delta \left (\alpha,-\beta ';\frac{\pi}{2}\right ) {\widetilde W}(\alpha ', \beta '|\rho).
\end{eqnarray}
In ref.\cite{ALV} we have generalized them into the bifractional Wigner function
\begin{eqnarray}\label{10}
A(\alpha, \beta;\theta_1,\theta_2|\rho) &=& {\rm Tr}(\rho  U(\alpha, \beta; \theta _1, \theta _2)]\nonumber\\
&=&|\cos(\theta_1-\theta_2)|^{\frac{1}{2}} \int d\alpha 'd\beta '\Delta \left (\beta,\alpha ';\theta _2\right )
\Delta \left (\alpha,-\beta ';\theta _1\right ){\widetilde W}(\alpha^\prime , \beta ^\prime|\rho)
\end{eqnarray}
In the special case $\theta_1=\theta_2=0$ this gives the Weyl function
\begin{eqnarray}
A(\alpha, \beta;0,0|\rho)={\widetilde W}(\beta ,-\alpha|\rho).
\end{eqnarray}
In the special case $\theta_1=\theta_2=\frac{\pi}{2}$ it gives the Wigner function
\begin{eqnarray}
A\left (\alpha, \beta;\frac{\pi}{2}, \frac{\pi}{2}|\rho\right )=W(\alpha, \beta|\rho).
\end{eqnarray}
Wigner functions quantify the noise, and Weyl functions quantify the correlations in a quantum system.
The Weyl function integrates a wavefunction with its displacement in phase space, and in this sense it describes correlations.
The widths of the Wigner function describe noise (both quantum and classical) in both the position and momentum.
The $\alpha, \beta$ in the Wigner function are position and momentum, while the $\alpha, \beta$ in the Weyl function are position and momentum increments,
related to correlations.

The quantity $A(\alpha, \beta; \theta _1, \theta_2|\rho)$ interpolates between the two, and shows that
correlations and uncertainties are different aspects of the same concept, which could be called `correlation-noise duality'.  
If $\theta _1, \theta_2$ are close to zero, this more general concept is close to correlations (because $A(\alpha, \beta; \theta_1, \theta_2|\rho)$
is close to the Weyl function), and if
$\theta _1, \theta_2$ are close to $\pi/2$, it is close to uncertainties (because $A(\alpha, \beta; \theta_1, \theta_2|\rho)$
is close to the Wigner function).
For general values of $\theta_1, \theta_2$ the $A(\alpha, \beta; \theta_1, \theta_2|\rho)$ interpolates between them, and quantifies
the noise-correlations duality.

\subsection{Marginal properties for $|A(\alpha, \beta; \theta_1, \theta_2|\rho)|^2$}\label{600}
In section \ref{p22} we gave the marginal properties for $U(\alpha, \beta; \theta _1, \theta _2)$.
Taking the trace of both sides of these equations with a density matrix $\rho$, we derive corresponding marginal properties for $A(\alpha, \beta; \theta_1, \theta_2|\rho)$.
Below we give marginal properties for $|A(\alpha, \beta; \theta_1, \theta_2|\rho)|^2$.

\begin{proposition}\label{P100}
\mbox{}
\begin{itemize}
\item[(1)]
Integration of $|A(\alpha, \beta; \theta_1, \theta_2|\rho)|^2$ with respect to $\alpha$ gives
\begin{eqnarray}\label{12}
 \int |A(\alpha, \beta; \theta_1, \theta_2|\rho)|^2 d\alpha &=&\sqrt{2} \pi |\cos(\theta_1-\theta_2)|
\nonumber \\&\times&\int dx \left \lvert \int d\alpha ' \Bra{x-\frac{\alpha '}{\sqrt{2}}} \rho \Ket{x+\frac{\alpha '}{\sqrt{2}}}
\Delta \left (\beta,\alpha ';\theta _2\right )\right \rvert ^2
\end{eqnarray}
\item[(2)]
Integration of $|A(\alpha, \beta; \theta_1, \theta_2|\rho)|^2$ with respect to $\beta$ gives
\begin{eqnarray}\label{12}
\int |A(\alpha, \beta; \theta_1, \theta_2|\rho)|^2 d\beta &=&\sqrt{2} \pi |\cos(\theta_1-\theta_2)|\nonumber\\&\times&
\int dp \left \lvert\int  d\beta ' \Bra{p-\frac{\beta '}{\sqrt{2}}} \rho \Ket{p+\frac{\beta '}{\sqrt{2}}}
\Delta \left (\alpha,-\beta ';\theta _1\right )\right \rvert ^2
\end{eqnarray}
\item[(3)]
Integration of $|A(\alpha, \beta; \theta_1, \theta_2|\rho)|^2$ with respect to both $\alpha$ and $\beta$ gives
\begin{eqnarray}\label{12}
 \int |A(\alpha, \beta; \theta_1, \theta_2|\rho)|^2 d\alpha d\beta &=&\pi |\cos(\theta_1-\theta_2)|{\rm Tr}(\rho ^2)
\end{eqnarray}
\end{itemize}
\end{proposition}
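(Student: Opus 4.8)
The plan is to expand the modulus squared $|A(\alpha,\beta;\theta_1,\theta_2|\rho)|^2=A\bar A$ using the integral representation of $A$ in terms of the Weyl function and the fractional kernels in Eq.(\ref{10}), and then to perform the marginal integration \emph{first}, exploiting the semigroup law Eq.(\ref{11}) together with the two elementary identities $\overline{\Delta(x,y;\theta)}=\Delta(x,y;-\theta)$ and $\Delta(x,y;\theta)=\Delta(y,x;\theta)$, both immediate from Eq.(\ref{2}). The strategy is uniform across the three parts: each marginal integration, via Eq.(\ref{11}), produces a delta function that identifies the auxiliary variables of the two copies $A$ and $\bar A$, thereby assembling the remaining integrals into a squared modulus.

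For part (1) I would write $A$ with dummy variables $(\alpha',\beta')$ and $\bar A$ with $(\alpha'',\beta'')$. The only factor depending on the marginalized variable $\alpha$ is $\Delta(\alpha,-\beta';\theta_1)\,\overline{\Delta(\alpha,-\beta'';\theta_1)}=\Delta(-\beta',\alpha;\theta_1)\Delta(\alpha,-\beta'';-\theta_1)$, so integrating over $\alpha$ and applying Eq.(\ref{11}) gives $\Delta(-\beta',-\beta'';0)=\delta(\beta'-\beta'')$, which collapses $\beta''=\beta'$. The surviving $\beta'$ integral is handled by passing to the position representation of the Weyl function, ${\widetilde W}(\alpha',\beta'|\rho)=\int dx\,\Bra{x-\tfrac{\alpha'}{\sqrt 2}}\rho\Ket{x+\tfrac{\alpha'}{\sqrt 2}}\exp(i\sqrt2\,\beta' x)$, which follows from $D(\alpha,\beta)=\exp(i\sqrt2\beta\hat x)\exp(-i\sqrt2\alpha\hat p)\exp(-i\alpha\beta)$ and $\exp(-i\sqrt2\alpha\hat p)\ket{x}=\ket{x+\sqrt2\alpha}$ (the $\alpha\beta$ phases cancel after recentring). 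Integrating over $\beta'$ then produces $\int d\beta'\exp[i\sqrt2\beta'(x-x')]=\sqrt2\pi\,\delta(x-x')$, identifying the two position points and giving exactly the squared modulus of Eq.(\ref{12}) with prefactor $\sqrt2\pi|\cos(\theta_1-\theta_2)|$. Part (2) is identical after interchanging the two slots: one integrates over $\beta$ first, uses Eq.(\ref{11}) to obtain $\delta(\alpha'-\alpha'')$, and inserts the momentum representation ${\widetilde W}(\alpha',\beta'|\rho)=\int dp\,\Bra{p-\tfrac{\beta'}{\sqrt2}}\rho\Ket{p+\tfrac{\beta'}{\sqrt2}}\exp(-i\sqrt2\,\alpha' p)$.

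For part (3) I would integrate the result of part (1) over $\beta$. The $\beta$-integral now acts only on $\Delta(\beta,\alpha';\theta_2)\overline{\Delta(\beta,\alpha'';\theta_2)}$ and once more yields $\delta(\alpha'-\alpha'')$ by Eq.(\ref{11}), collapsing the squared modulus to $\int dx\,d\alpha'\,\bigl|\Bra{x-\tfrac{\alpha'}{\sqrt2}}\rho\Ket{x+\tfrac{\alpha'}{\sqrt2}}\bigr|^2$. The final step is the change of variables $u=x-\tfrac{\alpha'}{\sqrt2}$, $v=x+\tfrac{\alpha'}{\sqrt2}$, whose Jacobian is $\sqrt2$, so $dx\,d\alpha'=\tfrac{1}{\sqrt2}\,du\,dv$; then $\int du\,dv\,|\Bra{u}\rho\Ket{v}|^2=\int du\,dv\,\Bra{u}\rho\Ket{v}\Bra{v}\rho\Ket{u}={\rm Tr}(\rho^2)$ by Hermiticity of $\rho$ and completeness $\int dv\,\ket{v}\bra{v}={\bf 1}$. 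Combining $\sqrt2\pi|\cos(\theta_1-\theta_2)|$ with the $\tfrac{1}{\sqrt2}$ Jacobian gives the stated $\pi|\cos(\theta_1-\theta_2)|\,{\rm Tr}(\rho^2)$.

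I expect the main obstacle to be purely bookkeeping rather than conceptual: keeping the conjugated kernels, the cancelling $\alpha\beta$ phases in the Weyl function, and the several $\sqrt2$ factors (from the $\delta(\sqrt2\,\cdot)$ rescalings and from the final Jacobian) mutually consistent, so that the overall constants collapse to precisely $\sqrt2\pi|\cos(\theta_1-\theta_2)|$ in parts (1)--(2) and $\pi|\cos(\theta_1-\theta_2)|$ in part (3). Once the conjugation and semigroup identities for $\Delta$ are in hand, no individual step is difficult.
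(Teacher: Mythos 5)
Your proposal is correct and, for parts (1) and (2), follows essentially the same route as the paper: expand $|A|^2$ through the Weyl-function representation Eq.(\ref{10}), integrate over the marginal variable first so that the composition law Eq.(\ref{11}) produces a delta function identifying the auxiliary variables of $A$ and $\bar A$, then pass to the position (resp.\ momentum) representation of $\widetilde W$ and use the resulting $\sqrt{2}\pi\,\delta(x-x')$ to assemble the squared modulus; your explicit derivation of that representation of $\widetilde W$, with the cancelling $\alpha\beta$ phases, is a step the paper simply assumes. The only genuine divergence is part (3): the paper proves it by an independent direct computation, writing out the full fourfold kernel with its explicit prefactor $\lambda=|\cos(\theta_1-\theta_2)|/(4\pi^2\sin\theta_1\sin\theta_2)$ and collapsing both the $\alpha$ and $\beta$ integrations into delta functions at once, whereas you integrate the already-established part (1) over $\beta$, obtain $\delta(\alpha'-\alpha'')$ from one more application of Eq.(\ref{11}), and finish with the $(u,v)$ change of variables (Jacobian $\sqrt{2}$) and Hermiticity of $\rho$. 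Your route is the more economical of the two: it reuses part (1), makes the tracking of the $\sqrt{2}$ factors transparent ($\sqrt{2}\pi\cdot\tfrac{1}{\sqrt{2}}=\pi$), and avoids re-deriving the kernel, while the paper's version has the minor virtue of being logically independent of part (1), so an error there would not propagate. Both arguments rest on the same two elementary identities for $\Delta$ (conjugation reverses $\theta$, symmetry in the arguments) and yield the stated constants.
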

\begin{proof}
\begin{itemize}
\item[(1)]
Using Eq.(\ref{10}) we get
\begin{eqnarray}
 \int |A(\alpha,\beta;\theta_1, \theta_2|\rho)|^2 d\alpha &= &|\cos(\theta_1 - \theta_2)| \int d\alpha 'd\beta '\Delta \left (\beta,\alpha ';\theta_2\right ) \Delta \left (\alpha,-\beta ';\theta_1\right ) \widetilde{W}(\alpha ', \beta '|\rho) \nonumber \\
 &\times& \int d\alpha'' d\beta'' d\alpha \Delta \left (-\beta,\alpha'';-\theta_2\right ) \Delta \left (-\alpha,-\beta'';-\theta_1\right ) \widetilde{W}(\alpha'', \beta''|\rho)
\end{eqnarray}
Using Eqn.(6), integration with respect to $\alpha$ gives a delta function, and then integration with respect $\beta ''$ gives
\begin{eqnarray}
&&\int |A(\alpha,\beta;\theta_1, \theta_2|\rho)|^2 d\alpha \nonumber\\&=&|\cos(\theta_1 - \theta_2)| \int d\alpha' d\alpha'' d\beta'  \Delta \left (\beta,\alpha ';\theta_2\right ) \Delta \left (-\beta,\alpha'';-\theta_2\right )  \widetilde{W}(\alpha', \beta'|\rho)  \widetilde{W}(\alpha'',- \beta'|\rho)  \nonumber \\
&=& |\cos(\theta_1 - \theta_2)|\int d\alpha' d\alpha'' d\beta'  \Delta \left (\beta,\alpha ';\theta_2\right ) \Delta \left (-\beta,\alpha'';-\theta_2\right ) \nonumber \\
&\times &\int  dx\Bra{x-\frac{\alpha'}{\sqrt{2}}}\rho\Ket{x+\frac{\alpha'}{\sqrt{2}}} \exp(i\sqrt{2}\beta'x) \int  dy \Bra{y-\frac{\alpha''}{\sqrt{2}}}\rho\Ket{y+\frac{\alpha''}{\sqrt{2}}} \exp(-i\sqrt{2}\beta'y) \nonumber \\
\end{eqnarray}
Integration with respect to $\beta'$ gives a delta function, and changing variables, $x\rightarrow x - \frac{\alpha'}{\sqrt{2}}$ and $y\rightarrow y - \frac{\alpha''}{\sqrt{2}}$, we get
\begin{alignat}{1}
&\int |A(\alpha,\beta;\theta_1, \theta_2|\rho)|^2 d\alpha = \sqrt{2}\pi|\cos(\theta_1 - \theta_2)|\int d\alpha' dx \Bra{x-\frac{\alpha'}{\sqrt{2}}}\rho\Ket{x+\frac{\alpha'}{\sqrt{2}}} \Delta \left (\beta,\alpha ';\theta_2\right )\nonumber \\ 
&\times 
\int d\alpha''  \Bra{x-\frac{\alpha''}{\sqrt{2}}}\rho\Ket{x+\frac{\alpha''}{\sqrt{2}}}  \Delta \left (-\beta,\alpha'';-\theta_2\right )\nonumber \\
 &=\sqrt{2}\pi|\cos(\theta_1 - \theta_2)|\int dx \left | \int d\alpha' \Bra{x-\frac{\alpha'}{\sqrt{2}}}\rho\Ket{x+\frac{\alpha'}{\sqrt{2}}} \Delta \left (\beta,\alpha ';\theta_2\right )\right |^2 
\end{alignat}

\item[(2)]
The proof of this is similar to that above

\item[(3)]
From Eqs(\ref{1}),(\ref{10}) we get
\begin{alignat}{1}
 \int |A(\alpha, \beta; \theta_1, \theta_2|\rho)|^2 d\alpha d\beta 
 &=\lambda\int d\alpha d\beta  d\alpha 'd\beta ' d\alpha '' d\beta ''
\nonumber\\
&\times\exp\left [\frac {i}{2}(\alpha ''^2-\alpha '^2)\cot \theta_2 +\frac {i}{2}(\beta ''^2-\beta '^2)\cot \theta _1 \right ]\nonumber \\
&\times
\exp\left[\frac{i\beta (\alpha '+\alpha '')}{\sin \theta_2}-\frac{i\alpha (\beta '+\beta '')}{\sin \theta _1}\right]
{\rm Tr}[\rho D(\alpha ',\beta ')]{\rm Tr}[\rho D(\alpha '',\beta '')]
\end{alignat}
where
\begin{eqnarray}
\lambda &=&|\cos(\theta _1-\theta_2)|\left [\frac {1+i\cot{\theta _1}}{2\pi}\right ]^{1/2}\left [\frac {1-i\cot{\theta _1}}{2\pi}\right ]^{1/2}\left [\frac {1+i\cot{\theta_2}}{2\pi}\right ]^{1/2}
\left [\frac {1-i\cot{\theta_2}}{2\pi}\right ]^{1/2}\nonumber\\&=&\frac{|\cos(\theta _1-\theta_2)|}{4\pi ^2 \sin\theta _1\sin\theta_2}
\end{eqnarray}
Integration over $\alpha, \beta$ gives $\delta$-functions, and then we get Eq.(\ref{12}).
\end{itemize}
\end{proof}
A special case of Eq.(\ref{12}) for $\theta_1=\theta_2 =0$ and also $\theta_1=\theta_2 =\frac{\pi}{2}$, is
\begin{eqnarray}
\frac{1}{\pi}\int |W(\alpha, \beta)|^2d\alpha d\beta=\frac{1}{\pi}\int |{\widetilde W}(\alpha, \beta)|^2d\alpha d\beta={\rm Tr}(\rho ^2).
\end{eqnarray}

We next introduce the quantities
\begin{eqnarray}\label{12d}
\langle \langle \alpha  ^n \rangle \rangle=\frac{1}{\pi {\rm Tr}(\rho ^2)}\int \alpha ^n|A(\alpha, \beta; \theta_1, \theta_2|\rho)|^2 d\alpha d\beta;\;\;\;\;
\delta \alpha (\theta_1,\theta_2)=\left [\langle \langle \alpha  ^n \rangle \rangle-(\langle \langle \alpha   \rangle \rangle)^2\right ]^{1/2}
\end{eqnarray}
and similarly for $\delta \beta (\theta_1,\theta_2)$.
Such quantities have been introduced in \cite{V} for the special case of Wigner and Weyl functions.
It has been shown there that for pure states they are the usual uncertainties, but for mixed states they are different.
If $\theta_1, \theta_2$ are close to zero, $\delta \alpha (\theta_1,\theta_2)$, $\delta \beta (\theta_1,\theta_2)$ 
quantify correlations in position and momentum, and if $\theta_1, \theta_2$ are close to $\pi/2$, they quantify noise.

In ref\cite{V} it has been proved that $\delta \alpha (\frac{\pi}{2},\frac{\pi}{2})\delta \beta (0,0)\ge \frac{1}{2}$.
In the case of arbitrary angles considered here, we have not proved a similar inequality, but we study this product through an example. 
As an example, we plot the $\delta \alpha (\frac{\pi}{2},\theta_2)\delta \beta (0,0)$  as a function of $\theta_2$
in fig.\ref{f2}, for the quantum state described with the density matrix
\begin{eqnarray}\label{bbb}
\rho=\frac{1}{2}[\ket {\alpha _0,\beta _0}\bra {\alpha _0,\beta _0}+\ket {-\alpha _0,-\beta _0}\bra {-\alpha _0,-\beta _0}];\;\;\;\;\;
\ket {\alpha _0,\beta _0}=D(\alpha, \beta)\ket{0};\;\;\;\;\alpha _0=2;\;\;\;\;\;\beta _0=0.
\end{eqnarray}
We also plot the $\delta \alpha (\frac{\pi}{4},\frac{\pi}{4})\delta \beta (0,0)$  as a function of $p$
in fig.\ref{f3}, for the quantum state described with the density matrix
\begin{eqnarray}\label{bbb}
\rho=p\ket {\alpha _0,\beta _0}\bra {\alpha _0,\beta _0}+(1-p)\ket {-\alpha _0,-\beta _0}\bra {-\alpha _0,-\beta _0};\;\;\;\;\;
0\le p\le 1;\;\;\;\;\alpha _0=2;\;\;\;\;\;\beta _0=0.
\end{eqnarray}

\section{Moyal star formalism for bifractional Wigner functions}\label{AA3}
In this section we present the basic steps of the Moyal formalism for bifractional Wigner functions.
We start with a lema which is needed in proofs later.

\begin{lemma}\label{www}
For arbitrary states $\ket{\gamma}, \ket{\zeta}, \ket{\epsilon}, \ket{\delta}$
\begin{eqnarray}
\frac{1}{\pi \cos(\theta_1- \theta_2)} \int d\alpha d\beta \braket{\gamma|U^{\dagger}(\alpha,\beta;\theta_1,\theta_2)|\delta} \braket{\epsilon|U(\alpha,\beta;\theta_1,\theta_2)|\zeta} =\braket{\gamma|\zeta} \braket{\epsilon|\delta}
\end{eqnarray}
\end{lemma}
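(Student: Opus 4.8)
The plan is to establish a resolution-of-identity-type formula for the operators $U(\alpha,\beta;\theta_1,\theta_2)$ over the measure $\tfrac{1}{\pi\cos(\theta_1-\theta_2)}\,d\alpha\,d\beta$, and then read off the stated matrix-element identity as a bilinear consequence. The key observation is that the integrand is a product of a matrix element of $U^\dagger$ and a matrix element of $U$, so the left-hand side is $\tfrac{1}{\pi\cos(\theta_1-\theta_2)}\int d\alpha\,d\beta\,\bra{\delta}U(\alpha,\beta;\theta_1,\theta_2)\ket{\gamma}^{*}\,\bra{\epsilon}U(\alpha,\beta;\theta_1,\theta_2)\ket{\zeta}$, which I would repackage as a trace. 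Indeed, writing it as $\tfrac{1}{\pi\cos(\theta_1-\theta_2)}\int d\alpha\,d\beta\,\mathrm{Tr}\!\left[\ket{\zeta}\bra{\gamma}\,U^\dagger(\alpha,\beta;\theta_1,\theta_2)\,\ket{\delta}\bra{\epsilon}\,U(\alpha,\beta;\theta_1,\theta_2)\right]$ exposes the structure as an integral of $U^\dagger X U$ sandwiched between two rank-one operators.

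The central step I would carry out is to prove the operator identity
\begin{eqnarray}\label{plan-resolution}
\frac{1}{\pi\cos(\theta_1-\theta_2)}\int d\alpha\,d\beta\;\mathrm{Tr}[X\,U^\dagger(\alpha,\beta;\theta_1,\theta_2)]\;U(\alpha,\beta;\theta_1,\theta_2)=X,
\end{eqnarray}
valid for trace-class $X$, which is the completeness relation for the bifractional displacement operators. I would derive this directly from the defining integral representation in Eq.(\ref{1}), expanding both $U$ and $U^\dagger$ in terms of ordinary displacement operators $D(\alpha',\beta')$ weighted by the fractional kernels $\Delta$. The $\alpha,\beta$ integrations then act on the $\Delta$-kernels exactly as in the unitarity computation behind Eq.(\ref{24}), collapsing via Eq.(\ref{11}) and producing the prefactor $|\cos(\theta_1-\theta_2)|^{-1}$; the residual integral over $D(\alpha',\beta')$ against $\mathrm{Tr}[X\,D^\dagger(\alpha',\beta')]$ is the standard displacement-operator completeness relation, which returns $X$. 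Substituting $X=\ket{\zeta}\bra{\gamma}$ and $\ket{\delta}\bra{\epsilon}$ for the second factor then yields $\mathrm{Tr}[\ket{\zeta}\bra{\gamma}\,\ket{\delta}\bra{\epsilon}]=\braket{\gamma|\delta}\braket{\epsilon|\zeta}$; care with which bra/ket pairs contract will give the stated $\braket{\gamma|\zeta}\braket{\epsilon|\delta}$ once the correct assignment is tracked through the conjugation.

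I expect the main obstacle to be the careful bookkeeping of the $\Delta$-kernel arguments and the operator ordering when the two fractional representations are multiplied under the integral. As the text already warns in connection with Eq.(\ref{24}), the ordering of the $D$-operators contributes phase factors through $D(\alpha',\beta')D(\alpha'',\beta'')=D(\alpha'+\alpha'',\beta'+\beta'')\exp[i(\beta'\alpha''-\alpha'\beta'')]$, and getting the signs of the angles $\theta_1,\theta_2$ right in $U^\dagger$ (which flips $\theta_i\to-\theta_i$ and the sign of the spatial arguments, as in Eq.(\ref{24})) is where errors would most likely creep in. Once the $\alpha,\beta$ integrals are performed and the $\Delta$-convolution identity Eq.(\ref{11}) is applied to collapse the auxiliary variables, the remaining calculation is the routine displacement-operator completeness argument, so the delicate part is purely the algebra of the fractional kernels and phases, not any conceptual difficulty.
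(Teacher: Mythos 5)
Your proposal is correct, and its computational core coincides with the paper's: both expand $U$ and $U^{\dagger}$ via Eq.(\ref{1}) in terms of displacement operators, collapse the $\alpha,\beta$ integration of the fractional kernels into delta functions using Eq.(\ref{11}), and finish by invoking a standard displacement-operator identity. The difference is organizational. The paper works directly at the level of matrix elements: it reduces the left-hand side to $\frac{1}{\pi}\int d\alpha''d\beta''\,\bra{\gamma}D^{\dagger}(\alpha'',\beta'')\ket{\delta}\bra{\epsilon}D(\alpha'',\beta'')\ket{\zeta}$ and then cites the Moyal orthogonality relation \cite{M1,M2}. You instead route through the operator completeness relation $\frac{1}{\pi\cos(\theta_1-\theta_2)}\int d\alpha\,d\beta\,{\rm Tr}[XU^{\dagger}]\,U=X$ and specialize to rank-one $X$; note that this intermediate identity is exactly Eq.(\ref{qqq}) of proposition \ref{p12}, which the paper proves \emph{after} the lemma (and, in its second proof there, derives \emph{from} the lemma). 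So you have reversed the paper's logical order; this is legitimate and non-circular, since your derivation of the operator identity is the direct one (it is the paper's first proof of that proposition, resting on $\frac{1}{\pi}\int {\rm Tr}[D^{\dagger}X]\,D=X$ from \cite{V1}, which is equivalent to the Moyal matrix-element form upon taking $X$ rank one). Two small repairs. First, your final substitution is stated loosely: to produce $\bra{\gamma}U^{\dagger}\ket{\delta}={\rm Tr}[XU^{\dagger}]$ you must take $X=\ket{\delta}\bra{\gamma}$ (not $\ket{\zeta}\bra{\gamma}$), after which $\bra{\epsilon}X\ket{\zeta}=\braket{\epsilon|\delta}\braket{\gamma|\zeta}$ is precisely the claimed right-hand side; you flagged this bookkeeping yourself, and it does close. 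Second, your anticipated obstacle about operator-ordering phases is not actually present here: unlike the unitarity computation Eq.(\ref{24}), in this integral $U^{\dagger}$ and $U$ never meet as operators, because ${\rm Tr}[XU^{\dagger}]$ (equivalently $\bra{\gamma}U^{\dagger}\ket{\delta}$) is a scalar, so no phases of the type $D(\alpha',\beta')D(\alpha'',\beta'')=D(\alpha'+\alpha'',\beta'+\beta'')\exp[i(\beta'\alpha''-\alpha'\beta'')]$ ever arise; only the kernel algebra and the cancellation of the $|\cos(\theta_1-\theta_2)|^{1/2}$ prefactors against the measure matter.
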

\begin{proof}
\begin{eqnarray}
&&\frac{1}{\pi \cos(\theta_1- \theta_2)} \int d\alpha d\beta \braket{\gamma|U^{\dagger}(\alpha,\beta;\theta_1,\theta_2)|\delta} \braket{\epsilon|U(\alpha,\beta;\theta_1,\theta_2)|\zeta}  \nonumber \\ &=&  \frac{1}{\pi}\int d\alpha d\beta d\alpha' d\beta' d\alpha'' d\beta''  \bra{\gamma} D(\alpha',\beta')\ket{\delta} \bra{\epsilon} D(\alpha'',\beta'')\ket{\zeta} 
\Delta(-\beta,\alpha';-\theta_2) \Delta(-\alpha,-\beta';-\theta_1)\nonumber\\&\times& \Delta(\beta,\alpha'';\theta_2) \Delta(\alpha,-\beta'';\theta_1)
=\frac{1}{\pi}\int d\alpha'' d\beta'' \bra{\gamma} D^{\dagger}(\alpha'',\beta'')\ket{\delta} \bra{\epsilon} D(\alpha'',\beta'')\ket{\zeta}
\end{eqnarray}
It is known \cite{M1,M2} that this is equal to $\braket{\gamma|\zeta}\braket{\epsilon|\delta}$.
\end{proof}

In the following proposition we express an operator $\Theta$ in terms of the $ U(\alpha,\beta;\theta_1,\theta_2)$, and the trace of a product of two operators
$\Theta _1 \Theta_2$, in terms of the corresponding bifractional Wigner functions.  
\begin{proposition}\label{p12}
\mbox{}
\begin{itemize}
\item[(1)]
\begin{eqnarray}\label{qqq}
\Theta =\frac{1}{\pi \cos(\theta_1- \theta_2)}\int d\alpha d\beta \;\;A^{\dagger}(\alpha,\beta;\theta_1,\theta _2|\Theta) U(\alpha,\beta;\theta_1,\theta_2) \nonumber \\
\end{eqnarray}

\item[(2)]
\begin{eqnarray}
{\rm Tr} (\Theta_1\Theta_2) = \frac{1}{\pi \cos(\theta_1-\theta_2)}\int d\alpha d\beta A(\alpha,\beta;\theta_1,\theta_2|\Theta_1) A^{\dagger}(\alpha,\beta;\theta_1,\theta_2|\Theta_2)
\end{eqnarray}
\end{itemize}
\end{proposition}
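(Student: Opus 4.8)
The plan is to prove part (1) by verifying that the operator on the right-hand side of Eq.(\ref{qqq}) has the same matrix elements as $\Theta$, and then to obtain part (2) as an immediate corollary. Throughout I read $A(\alpha,\beta;\theta_1,\theta_2|\Theta)={\rm Tr}[\Theta\,U(\alpha,\beta;\theta_1,\theta_2)]$ as in Eq.(\ref{10}), and I read the dagger as acting on the operator inside the trace, so that $A^\dagger(\alpha,\beta;\theta_1,\theta_2|\Theta)={\rm Tr}[\Theta\,U^\dagger(\alpha,\beta;\theta_1,\theta_2)]$; for a Hermitian argument this is just the complex conjugate of $A$. The whole argument rests on Lemma \ref{www}, which here plays exactly the role that the orthogonality of ordinary displacement operators plays in the usual Weyl reconstruction formula.

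For part (1) I would call $\Theta'$ the right-hand side of Eq.(\ref{qqq}) and compute $\bra{\epsilon}\Theta'\ket{\delta}$ for arbitrary $\ket{\epsilon},\ket{\delta}$. First I would substitute $A^\dagger(\alpha,\beta;\theta_1,\theta_2|\Theta)={\rm Tr}[\Theta\,U^\dagger]=\sum_{j,k}\bra{j}\Theta\ket{k}\bra{k}U^\dagger\ket{j}$, having inserted a resolution of the identity $\sum_k\ket{k}\bra{k}={\bf 1}$ between $\Theta$ and $U^\dagger$. The coefficients $\bra{j}\Theta\ket{k}$ are independent of $\alpha,\beta$ and come out of the integral, leaving the factor $\frac{1}{\pi\cos(\theta_1-\theta_2)}\int d\alpha d\beta\,\bra{k}U^\dagger\ket{j}\bra{\epsilon}U\ket{\delta}$. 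This is precisely the integral of Lemma \ref{www} with the identifications $\gamma\to k$, $\delta\to j$, $\epsilon\to\epsilon$, $\zeta\to\delta$, so it equals $\braket{k|\delta}\braket{\epsilon|j}$. Collapsing the two sums then gives $\bra{\epsilon}\Theta'\ket{\delta}=\sum_{j,k}\braket{\epsilon|j}\bra{j}\Theta\ket{k}\braket{k|\delta}=\bra{\epsilon}\Theta\ket{\delta}$, which establishes $\Theta'=\Theta$.

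For part (2) I would simply feed the reconstruction of part (1) for $\Theta_2$ into the product. Writing $\Theta_2=\frac{1}{\pi\cos(\theta_1-\theta_2)}\int d\alpha d\beta\,A^\dagger(\alpha,\beta;\theta_1,\theta_2|\Theta_2)\,U(\alpha,\beta;\theta_1,\theta_2)$, multiplying on the left by $\Theta_1$, taking the trace, and pulling the trace through the integral by linearity, I obtain $\frac{1}{\pi\cos(\theta_1-\theta_2)}\int d\alpha d\beta\,A^\dagger(\alpha,\beta;\theta_1,\theta_2|\Theta_2)\,{\rm Tr}[\Theta_1 U]$. Since ${\rm Tr}[\Theta_1 U]=A(\alpha,\beta;\theta_1,\theta_2|\Theta_1)$ and the integrand is a product of scalars, this is exactly the claimed formula.

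The step I expect to require the most care is not any single calculation but the bookkeeping: matching the four bra--ket slots of Lemma \ref{www} to the correct summation indices, and being explicit about the convention for $A^\dagger$ (dagger on $U$ versus plain complex conjugation, which differ for non-Hermitian $\Theta$). A secondary point worth a remark is that the right-hand side of Eq.(\ref{qqq}) appears to depend on $(\theta_1,\theta_2)$ whereas $\Theta$ does not; this apparent dependence cancels automatically, since Lemma \ref{www} holds for every $(\theta_1,\theta_2)\in\mathfrak{T}$, so the reconstruction is valid separately for each admissible choice of angles.
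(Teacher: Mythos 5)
Your proof is correct. For part (1) it is essentially the paper's own (second) proof: the paper likewise reduces Eq.(\ref{qqq}) to Lemma \ref{www} by expanding $A^{\dagger}(\alpha,\beta;\theta_1,\theta_2|\Theta)={\rm Tr}[\Theta U^{\dagger}(\alpha,\beta;\theta_1,\theta_2)]$ over a complete set of states and matching matrix elements on both sides; the only difference is that the paper inserts the coherent-state resolution of the identity where you insert a discrete orthonormal basis $\sum_k\ket{k}\bra{k}={\bf 1}$, which is immaterial, and your slot-matching in the lemma and your convention $A^{\dagger}={\rm Tr}[\Theta U^{\dagger}]$ agree with the paper's usage. (The paper also gives a first, independent proof of part (1) that unwinds $U$ into displacement operators via Eq.(\ref{1}), collapses the fractional kernels with Eq.(\ref{11}), and invokes the standard Weyl reconstruction $\frac{1}{\pi}\int d\alpha\, d\beta\,{\rm Tr}[D^{\dagger}(\alpha,\beta)\Theta]D(\alpha,\beta)=\Theta$; your route bypasses that machinery entirely.) Where you genuinely diverge is part (2): the paper proves it from scratch by substituting the definition Eq.(\ref{10}) for both $A$ and $A^{\dagger}$, contracting the four fractional kernels into delta functions via Eq.(\ref{11}), and evaluating the surviving integral of $\widetilde{W}(-\alpha'',-\beta''|\Theta_1)\widetilde{W}(\alpha'',\beta''|\Theta_2)$ in the position representation, whereas you obtain part (2) in one line by tracing the reconstruction formula of part (1) against $\Theta_1$ and using linearity of the trace. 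Your derivation is shorter and makes the logical dependence of (2) on (1) explicit; the paper's longer computation is self-contained and doubles as an independent check of the normalization $\frac{1}{\pi\cos(\theta_1-\theta_2)}$ and of the kernel algebra. Your closing remark --- that the apparent $(\theta_1,\theta_2)$-dependence of the right-hand side of Eq.(\ref{qqq}) is harmless because the lemma holds separately for every admissible pair of angles in $\mathfrak{T}$ --- is also correct.
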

\begin{proof}
\mbox{}
\begin{itemize}
\item[(1)]
\begin{eqnarray}
&&\frac{1}{\pi \cos(\theta_1- \theta_2)}\int d\alpha d\beta \;\;A^{\dagger}(\alpha,\beta;\theta_1,\theta_2|\Theta) U(\alpha,\beta;\theta_1,\theta_2) \nonumber \\
&&=\frac{1}{\pi \cos(\theta_1- \theta_2)}\int d\alpha d\beta \;\;{\rm Tr}[U^{\dagger}(\alpha,\beta;\theta_1,\theta_2|\Theta)] U(\alpha,\beta;\theta_1,\theta_2)
\nonumber\\
 &&= \frac{1}{\pi}\int d\alpha d\beta d\alpha' d\beta' d\alpha'' d\beta'' \;\;{\rm Tr}[D^{\dagger}(\alpha',\beta')\Theta ] D(\alpha'',\beta'') 
\Delta(-\beta,\alpha';-\theta_2) \Delta(-\alpha,-\beta';-\theta_1)\nonumber\\&\times& \Delta(\beta,\alpha'';\theta_2) \Delta(\alpha,-\beta'';\theta_1) =
\frac{1}{\pi}\int d\alpha d\beta \;\;{\rm Tr}[D^{\dagger}(\alpha,\beta)\Theta] D(\alpha,\beta)=\Theta 
\end{eqnarray}
The last equality is a known relation (e.g., \cite{V1}).

A second proof based on lemma \ref{www}, is to consider the matrix elements of both sides with ordinary coherent states $\ket{z}$, $\ket{w}$:
\begin{eqnarray}\label{qw1}
\bra{z}\Theta \ket{w}=\frac{1}{\pi \cos(\theta_1- \theta_2)}\int d\alpha d\beta \;\;A^{\dagger}(\alpha,\beta;\theta_1,\theta _2|\Theta) \bra{z}U(\alpha,\beta;\theta_1,\theta_2)\ket{w} \nonumber \\
\end{eqnarray}
Also
\begin{eqnarray}\label{qw2}
A^{\dagger}(\alpha,\beta;\theta_1,\theta _2|\Theta) ={\rm Tr}[\Theta U^{\dagger}(\alpha,\beta;\theta_1,\theta_2)]=
\int \frac{d^2z}{\pi} \frac{d^2u}{\pi}\bra{\zeta}\theta \ket{u}\bra{u}U^{\dagger}(\alpha,\beta;\theta_1,\theta_2)\ket{\zeta}
\end{eqnarray}
Combining Eqs.(\ref{qw1}),(\ref{qw2}) and using lemma \ref{www}, which is valid for arbitrary states and therefore for coherent states,
we prove Eq.(\ref{qqq}).
\item[(2)]
Using Eq.(\ref{10}) we get
\begin{eqnarray}
&& \frac{1}{\pi \cos(\theta_1-\theta_2)}\int d\alpha d\beta A(\alpha,\beta;\theta_1,\theta_2|\Theta_1) A^{\dagger}(\alpha,\beta;\theta_1,\theta_2|\Theta_2) \nonumber \\
&=&  \frac{1}{\pi \cos(\theta_1-\theta_2)}  \int d\alpha' d\beta' d\alpha'' d\beta''   \widetilde{W}(\alpha',\beta'|\Theta_1) \widetilde{W}(\alpha'',\beta''|\Theta_2) \nonumber \\
&\times&\Delta(-\beta,\alpha';-\theta_2) \Delta(-\alpha,-\beta';-\theta_1) \Delta(\beta,\alpha'';\theta_2) \Delta(\alpha,-\beta'';\theta_1) 
\end{eqnarray}    
Using Eq.(\ref{11}) we get delta functions, which give
\begin{eqnarray}
&=& \frac{1}{\pi}\int d\alpha'' d\beta'' \; \widetilde{W}(-\alpha'',-\beta''|\Theta_1) \widetilde{W}(\alpha'',\beta''|\Theta_2) \nonumber \\
&=& \frac{1}{\pi}\int d\alpha'' d\beta'' \; \Braket{x+\frac{\alpha''}{\sqrt{2}}|\Theta_1|x-\frac{\alpha''}{\sqrt{2}}} \Braket{y-\frac{\alpha''}{\sqrt{2}}|\Theta_2|x+\frac{\alpha''}{\sqrt{2}}}  e^{i\sqrt{2}\beta''(y-x)}
\end{eqnarray}
Integrating with respect to $\beta''$, and changing variables, we show that:
\begin{eqnarray}
&=&\sqrt{2} \int d\alpha'' dx \; \Braket{x+\frac{\alpha''}{\sqrt{2}}|\Theta_1|x-\frac{\alpha''}{\sqrt{2}}} \Braket{x-\frac{\alpha''}{\sqrt{2}}|\Theta_1|x+\frac{\alpha''}{\sqrt{2}}}  \nonumber \\
&=&\int dk \braket{k|\Theta_1\Theta_2|k} = {\rm Tr}[\Theta_1\Theta_2].
\end{eqnarray}
\end{itemize}
\end{proof}

Given the $A(\alpha, \beta; \theta_1, \theta_2|\Theta _1)$ and $A(\alpha, \beta; \theta_1, \theta_2|\Theta _2)$ of two operators
$\Theta _1, \Theta _2$, the following proposition gives the $A(\alpha, \beta; \theta_1, \theta_2|\Theta _1\Theta _2)$ of their product.
\begin{proposition}\label{P15}
\begin{eqnarray}
&&A(\epsilon,\zeta;\theta_1,\theta_2|\Theta_1\Theta_2) \nonumber
\\&&=  \frac{1}{\pi[\cos(\theta_1-\theta_2)]^{1/2}} \int d\alpha'd\beta' d\alpha\; d\beta\; d\gamma\; d\lambda \;d\gamma' \;d\lambda' \; A^{\dagger}(\alpha,\beta;\theta_1,\theta_2|\Theta_1)\; A^{\dagger}(\alpha',\beta';\theta_1,\theta_2|\Theta_2) \nonumber \\ &\times& \Delta(\beta,\gamma;\theta_2) \Delta(\alpha,-\lambda;\theta_1) \Delta(\beta',\gamma';\theta_2) \Delta(\alpha',-\lambda';\theta_1) \Delta(\zeta,-(\gamma+\gamma');\theta_2)  \Delta(\epsilon,\lambda+\lambda';\theta_1) 
\nonumber \\ &&\times\exp[i\lambda\gamma'-i\gamma\lambda']
\end{eqnarray}
\end{proposition}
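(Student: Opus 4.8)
The plan is to begin from the definition $A(\epsilon,\zeta;\theta_1,\theta_2|\Theta_1\Theta_2)={\rm Tr}[\Theta_1\Theta_2\,U(\epsilon,\zeta;\theta_1,\theta_2)]$ and to reduce everything to the algebra of ordinary displacement operators. First I would insert the reconstruction formula of proposition \ref{p12}(1) for each factor separately, writing $\Theta_1$ with integration variables $(\alpha,\beta)$ and $\Theta_2$ with integration variables $(\alpha',\beta')$, so that each factor carries a weight $A^\dagger$ and a prefactor $[\pi\cos(\theta_1-\theta_2)]^{-1}$. By linearity of the trace the task then collapses to evaluating ${\rm Tr}[U(\alpha,\beta;\theta_1,\theta_2)U(\alpha',\beta';\theta_1,\theta_2)U(\epsilon,\zeta;\theta_1,\theta_2)]$, integrated against $A^\dagger(\alpha,\beta;\theta_1,\theta_2|\Theta_1)A^\dagger(\alpha',\beta';\theta_1,\theta_2|\Theta_2)$ and multiplied by the overall constant $[\pi\cos(\theta_1-\theta_2)]^{-2}$.

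Next I would expand each of the three operators through Eq.(\ref{1}), assigning displacement variables $(\gamma,\lambda)$ to the first, $(\gamma',\lambda')$ to the second and $(\alpha'',\beta'')$ to the third; each expansion contributes a factor $|\cos(\theta_1-\theta_2)|^{1/2}$ together with the pair of $\Delta$-kernels indicated in the statement. The trace then reads ${\rm Tr}[D(\gamma,\lambda)D(\gamma',\lambda')D(\alpha'',\beta'')]$. Applying the displacement composition law $D(\gamma,\lambda)D(\gamma',\lambda')=D(\gamma+\gamma',\lambda+\lambda')\exp[i(\lambda\gamma'-\gamma\lambda')]$ to the first pair is precisely what produces the surviving phase $\exp[i\lambda\gamma'-i\gamma\lambda']$ of the proposition.

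Finally I would evaluate ${\rm Tr}[D(\gamma+\gamma',\lambda+\lambda')D(\alpha'',\beta'')]$ by applying the composition law once more and then the orthogonality relation ${\rm Tr}[D(a,b)]=\pi\,\delta(a)\delta(b)$, the same normalization already used in the proof of proposition \ref{p12}. This yields $\pi\,\delta(\gamma+\gamma'+\alpha'')\delta(\lambda+\lambda'+\beta'')$ times a residual phase that reduces to unity on the support $\alpha''=-(\gamma+\gamma')$, $\beta''=-(\lambda+\lambda')$. Performing the $(\alpha'',\beta'')$-integration then sends the third operator's kernels $\Delta(\zeta,\alpha'';\theta_2)\Delta(\epsilon,-\beta'';\theta_1)$ into $\Delta(\zeta,-(\gamma+\gamma');\theta_2)\Delta(\epsilon,\lambda+\lambda';\theta_1)$, reproducing the stated integrand. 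Collecting the constants, $[\pi\cos(\theta_1-\theta_2)]^{-2}\,|\cos(\theta_1-\theta_2)|^{3/2}\,\pi=\{\pi[\cos(\theta_1-\theta_2)]^{1/2}\}^{-1}$, gives the announced prefactor. I expect the main obstacle to be the phase bookkeeping: one must check that the first composition contributes exactly $\exp[i(\lambda\gamma'-\gamma\lambda')]$ while the residual phase from the second composition cancels identically on the delta-function support, and one must pin down the factor $\pi$ in the trace of a displacement operator so that the three factors $|\cos|^{1/2}$, the two factors $[\pi\cos]^{-1}$ and this $\pi$ combine into exactly $\{\pi[\cos(\theta_1-\theta_2)]^{1/2}\}^{-1}$.
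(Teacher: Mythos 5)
Your proposal is correct and follows essentially the same route as the paper: reconstruct $\Theta_1$ and $\Theta_2$ via proposition \ref{p12}(1), expand all three bifractional operators in displacement operators, and evaluate the trace of three displacements to produce the delta functions, the phase $\exp[i\lambda\gamma'-i\gamma\lambda']$, and the prefactor $\{\pi[\cos(\theta_1-\theta_2)]^{1/2}\}^{-1}$. The only cosmetic difference is that the paper quotes the three-displacement trace formula ${\rm Tr}[D(\gamma,\lambda)D(\gamma',\lambda')D(\epsilon',\zeta')]=\pi\,\delta(\epsilon'+\gamma+\gamma')\,\delta(\zeta'+\lambda+\lambda')\exp[i(\lambda\gamma'-\gamma\lambda')]$ directly, whereas you derive it from the composition law and ${\rm Tr}[D(a,b)]=\pi\,\delta(a)\,\delta(b)$; your constant and phase bookkeeping both check out.
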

\begin{proof}
Eq.(\ref{10}) gives
\begin{eqnarray}\label{7b}
A(\epsilon,\zeta;\theta_1,\theta_2|\Theta_1\Theta_2)= |\cos(\theta_1-\theta_2)|^{1/2}\int d\epsilon'\;d\zeta' \; \Delta(\zeta,\epsilon';\theta_2) \Delta(\epsilon,-\zeta';\theta_1) \;{\rm Tr}[D(\epsilon',\zeta')\Theta_1\Theta_2]
\end{eqnarray}
Using Eq.(\ref{qqq}) we get
\begin{eqnarray}
\Theta _1\Theta _2
&=& \frac{1}{[\pi\cos(\theta_1-\theta_2)]^2}\int d\alpha'd\beta' d\alpha d\beta \;A^{\dagger}(\alpha,\beta;\theta_1,\theta_2|\Theta_1)\; A^{\dagger}(\alpha',\beta';\theta_1,\theta_2|\Theta_2)\nonumber\\&\times&  \;U(\alpha,\beta;\theta_1,\theta_1) \;U(\alpha',\beta';\theta_1,\theta_2)
\end{eqnarray}
Therefore
\begin{eqnarray}\label{7b}
A(\epsilon,\zeta;\theta_1,\theta_2|\Theta_1\Theta_2) &=& 
\frac{[{\cos(\theta_1-\theta_2)}]^{-3/2}}{ \pi^2} \int d\alpha'd\beta' d\alpha d\beta \; d\epsilon' d \zeta' A^{\dagger}(\alpha,\beta;\theta_\alpha,\theta_\beta|\Theta_1)\; A^{\dagger}(\alpha',\beta';\theta_1,\theta_2|\Theta_2) \nonumber \\ &\times& {\rm Tr}[U(\alpha,\beta;\theta_1,\theta_2) U(\alpha',\beta';\theta_1,\theta_2)D(\epsilon',\zeta')] \Delta(\zeta,\epsilon';\theta_2) \Delta(\epsilon,-\zeta';\theta_1)\nonumber\\&=&
\frac{[{\cos(\theta_1-\theta_2)}]^{-3/2}}{ \pi^2} \int d\alpha'd\beta' d\alpha d\beta \; d\epsilon' d \zeta'  d\gamma\; d\lambda \;d\gamma' \;d\lambda' \;A^{\dagger}(\alpha,\beta;\theta_1,\theta_2|\Theta_1)\;\nonumber \\ &\times&  A^{\dagger}(\alpha',\beta';\theta_1,\theta_2|\Theta_2) 
\Delta(\beta,\gamma;\theta_2) \Delta(\alpha,-\lambda;\theta_1) \Delta(\beta',\gamma';\theta_2) \Delta(\alpha',-\lambda';\theta_1)\nonumber\\&\times& 
{\rm Tr}[D(\gamma, \lambda )D(\gamma ', \lambda ')D(\epsilon',\zeta')] \Delta(\zeta,\epsilon';\theta_2) \Delta(\epsilon,-\zeta';\theta_1)
\end{eqnarray}
But
\begin{eqnarray}
{\rm Tr} [D(\gamma,\lambda)D(\gamma',\lambda')D(\epsilon',\zeta')] = \pi \delta(\epsilon'+\gamma+\gamma')\delta(\zeta'+\lambda+\lambda')\exp[i(\lambda \gamma '-\gamma \lambda ')].
\end{eqnarray}
Inserting this in Eq.(\ref{7b}) we prove the proposition.
\end{proof}

\section{Generalized Berezin formalism}\label{AA4}

The Berezin formalism\cite{BR1,BR2,BR3,BR4} represents an operator $\Theta$ with the analytic function $L(z,w^*;\theta_1,\theta_2|\Theta)$ defined below.
It shows that the $L(z,z^*;\theta_1,\theta_2|\Theta _1 \Theta _2)$ of the product of two operators $\Theta _1 \Theta _2$, can be expanded as a Taylor series,
where the first term is the product $ L(z,z^*;\theta_1,\theta_2|\Theta_1) L(z,z^*;\theta_1,\theta_2|\Theta_2) $ (which is classical in the sense that it is commutative),
and the other terms are quantum corrections (and go to zero in the limit $\hbar \rightarrow 0$).
The Laplacian used in the standard Berezin formalism, is replaced here with the `bifractional Laplacian\rq{} defined below.

\begin{lemma}
For $K>0$
\begin{eqnarray}\label{8x}
&&\frac{1}{2\pi \cos(\theta_1 - \theta_2)} \int d\alpha' d\beta' F(\alpha',\beta') K \exp\left\{ -\frac{K[d(\alpha -\alpha ',\beta -\beta '|\theta_1,\theta_2)]^2 } {\cos^2(\theta_1-\theta_2)}\right \} 
\nonumber\\&&= \frac{1}{2} \left [\exp{\left(  \frac{\Delta_{(\alpha,\beta|\theta_1,\theta_2)}}{4K} \right) } F(\alpha,\beta) \right ], 
\end{eqnarray}
where
\begin{eqnarray}\label{c45}
\Delta_{(\alpha,\beta|\theta_1,\theta_2)} =  \frac{\partial^2}{\partial^2 \alpha} + \frac{\partial^2}{\partial^2 \beta} - 2\frac{\partial^2}{\partial \alpha \partial \beta}\sin(\theta_1 - \theta_2).
\end{eqnarray}
If we replace $\alpha\rq{}, \beta \rq{}$ with $w,w^*$ given in Eq.(\ref{rrr})
(and $\alpha, \beta $ with $z,z^*$), then Eq.(\ref{c45}) can be re-written as
 \begin{eqnarray}\label{4cv}
 &&\frac{1}{2\pi} \int d^2 w F(w,w^*) K  \exp{ \left[-K{|w(\theta_1,\theta_2) - z(\theta_1,\theta_2)|}^2 \right]} = \frac{1}{2} \left [\exp{\left( \frac{\Delta_{(z,z^*|\theta_1,\theta_2)}}{4K} \right)} F(z,z^*) \right ],
 \end{eqnarray}
 where $\Delta_{(z,z^*|\theta_1,\theta_2)}$ is the `bifractional Laplacian\rq{}
  \begin{eqnarray}
 &&\Delta_{(z,z^*|\theta_1,\theta_2)} = 4\frac{\partial ^2}{\partial z \partial z^*} - 2i\left[\frac{\partial^2}{\partial^2 z} - \frac{\partial^2}{\partial^2 z^*} \right] \sin(\theta_1-\theta_2).
\end{eqnarray}
\end{lemma}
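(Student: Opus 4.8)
The plan is to treat this as an anisotropic Gaussian-convolution (heat-kernel) identity: the right-hand side is the exponential of a second-order differential operator applied to $F$, which is exactly what a Gaussian smoothing of $F$ produces, with the cross term $-2\sin(\theta_1-\theta_2)\,\partial_\alpha\partial_\beta$ in $\Delta_{(\alpha,\beta|\theta_1,\theta_2)}$ coming from the off-diagonal term $2x'y'\sin(\theta_1-\theta_2)$ in the distance $[d(x',y'|\theta_1,\theta_2)]^2$ of Eq.(\ref{app1}). First I would shift the integration variables, writing $\alpha'=\alpha-s$, $\beta'=\beta-t$, so that the exponent becomes $-\tfrac{K}{\cos^2(\theta_1-\theta_2)}\,\mathbf u^{T}M\mathbf u$ with $\mathbf u=(s,t)^{T}$ and $M=\begin{pmatrix}1&\sin(\theta_1-\theta_2)\\ \sin(\theta_1-\theta_2)&1\end{pmatrix}$, noting that $\det M=\cos^2(\theta_1-\theta_2)$.

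Next I would expand the shift of $F$ as an exponential of the translation generator, $F(\alpha-s,\beta-t)=\exp[-(s\,\partial_\alpha+t\,\partial_\beta)]F(\alpha,\beta)$, pull this operator out of the $s,t$-integral (the operators $\partial_\alpha,\partial_\beta$ commute and act only on the external variables), and recognise the remaining $s,t$-integral as a standard two-dimensional Gaussian with a source. Using $\int d^2u\,\exp(-\mathbf u^{T}A\mathbf u+\mathbf j^{T}\mathbf u)=\pi(\det A)^{-1/2}\exp(\tfrac14\mathbf j^{T}A^{-1}\mathbf j)$ with $A=\tfrac{K}{\cos^2(\theta_1-\theta_2)}M$ and $\mathbf j=-(\partial_\alpha,\partial_\beta)^{T}$, two things happen at once: the normalisation $\pi(\det A)^{-1/2}=\pi|\cos(\theta_1-\theta_2)|/K$ cancels the prefactor $K/[2\pi\cos(\theta_1-\theta_2)]$ down to $\tfrac12$, and the quadratic form collapses to $\tfrac14\mathbf j^{T}A^{-1}\mathbf j=\tfrac1{4K}(\partial_\alpha^2+\partial_\beta^2-2\sin(\theta_1-\theta_2)\partial_\alpha\partial_\beta)=\Delta_{(\alpha,\beta|\theta_1,\theta_2)}/(4K)$, since $A^{-1}=\tfrac{\cos^2(\theta_1-\theta_2)}{K}M^{-1}$ and $M^{-1}$ carries exactly the factor $1/\cos^2(\theta_1-\theta_2)$ that cancels. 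This establishes Eq.(\ref{8x}).

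For the complex form, Eq.(\ref{4cv}), I would change variables from $(\alpha',\beta')$ to $(w,w^*)$ via Eq.(\ref{rrr}). The exponent is already handled by proposition \ref{P2}: Eq.(\ref{e2a}) identifies $[d(\alpha-\alpha',\beta-\beta'|\theta_1,\theta_2)]^2/\cos^2(\theta_1-\theta_2)$ with $|w(\theta_1,\theta_2)-z(\theta_1,\theta_2)|^2$, so the Gaussian becomes $\exp[-K|w-z|^2]$. The measure matches because the Jacobian of the change of frame, computed in section \ref{159} as $1/\cos(\theta_1-\theta_2)$, turns $\tfrac{1}{2\pi\cos(\theta_1-\theta_2)}d\alpha'd\beta'$ into $\tfrac{1}{2\pi}d^2w$. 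It then remains to push $\Delta_{(\alpha,\beta|\theta_1,\theta_2)}$ through the same linear change of variables by the chain rule and verify that it equals the bifractional Laplacian $\Delta_{(z,z^*|\theta_1,\theta_2)}=4\partial_z\partial_{z^*}-2i(\partial_z^2-\partial_{z^*}^2)\sin(\theta_1-\theta_2)$.

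The main obstacle is twofold. Computationally, the most error-prone step is the chain-rule transformation of the second-order operator into the holomorphic and antiholomorphic variables $z,z^*$: the cross term must reproduce precisely the $-2i(\partial_z^2-\partial_{z^*}^2)\sin(\theta_1-\theta_2)$ piece, and one has to track carefully the non-orthogonality encoded in $e_1,e_2$. Conceptually, the expansion $F(\alpha-s,\beta-t)=\exp[-(s\partial_\alpha+t\partial_\beta)]F$ together with the interchange of this operator series with the Gaussian integral is formal; it is justified as an asymptotic (heat-semigroup) identity for sufficiently smooth and suitably decaying $F$, which is the implicit standing assumption. A minor point to flag is the sign of $\cos(\theta_1-\theta_2)$: the normalisation actually produces $|\cos(\theta_1-\theta_2)|$, which matches the stated $\tfrac12$ cleanly when $\cos(\theta_1-\theta_2)>0$ and otherwise up to the absolute value already used throughout the paper.
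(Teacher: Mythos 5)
Your proof of Eq.(\ref{8x}) is correct, and it takes a genuinely different route from the paper's: the paper's entire proof is the one-line remark that one should Fourier transform both sides, whereas you evaluate the convolution directly by writing $F(\alpha-s,\beta-t)=\exp[-(s\partial_\alpha+t\partial_\beta)]F(\alpha,\beta)$ and performing the Gaussian integral with an operator-valued source. The bookkeeping checks out: $\det A=K^2/\cos^2(\theta_1-\theta_2)$ collapses the prefactor to $\tfrac12$, and since $A^{-1}=\tfrac1K M^{-1}\cos^2(\theta_1-\theta_2)$ carries $-\sin(\theta_1-\theta_2)$ off the diagonal, the source term reproduces $\tfrac1{4K}\bigl(\partial_\alpha^2+\partial_\beta^2-2\sin(\theta_1-\theta_2)\partial_\alpha\partial_\beta\bigr)$ with the right sign. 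Your caveat about $|\cos(\theta_1-\theta_2)|$ versus $\cos(\theta_1-\theta_2)$ is also well taken. What your method buys over the paper's is that the origin of the $\tfrac12$ and of the anisotropic quadratic form is completely explicit, at the (admitted) price of formally commuting the translation operator with the integral --- the same formality already implicit in the statement itself.

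There is, however, a genuine problem in the step you deferred for Eq.(\ref{4cv}). Pushing $\Delta_{(\alpha,\beta|\theta_1,\theta_2)}$ through the change of variables of Eq.(\ref{rrr}), as you propose, does \emph{not} produce the displayed bifractional Laplacian. With $z(\theta_1,\theta_2)$ as in Eq.(\ref{rrr}) the chain rule gives
\begin{equation}
\partial_\alpha=\frac{e_2\partial_z+e_2^*\partial_{z^*}}{\cos(\theta_1-\theta_2)},\qquad
\partial_\beta=\frac{ie_1\partial_z-ie_1^*\partial_{z^*}}{\cos(\theta_1-\theta_2)},
\end{equation}
and when you assemble $\partial_\alpha^2+\partial_\beta^2-2\sin(\theta_1-\theta_2)\partial_\alpha\partial_\beta$, the coefficients of $\partial_z^2$ and of $\partial_{z^*}^2$ cancel identically, leaving exactly $4\partial_z\partial_{z^*}$ and nothing else. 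This cancellation is forced by your own part one: after the change of variables the Gaussian is isotropic in $w$, so its generator must be rotation invariant in those coordinates, and $\partial_z^2-\partial_{z^*}^2$ is not. The operator actually displayed in the lemma, $4\partial_z\partial_{z^*}-2i(\partial_z^2-\partial_{z^*}^2)\sin(\theta_1-\theta_2)$, is what $\Delta_{(\alpha,\beta|\theta_1,\theta_2)}$ becomes under the \emph{standard} Wirtinger substitution $z=\alpha+i\beta$, i.e. $\partial_\alpha=\partial_z+\partial_{z^*}$, $\partial_\beta=i(\partial_z-\partial_{z^*})$, which is a different complex coordinate from the skewed one of Eq.(\ref{rrr}). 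So if you carry out your plan literally, the final ``verification'' will fail: the target is not what that substitution yields. To land on the paper's formula you must convert the left-hand side using Eq.(\ref{rrr}) (measure and exponent, exactly as you did via Eq.(\ref{e2a}) and the Jacobian $1/\cos(\theta_1-\theta_2)$), but rewrite the right-hand side operator in the coordinate $z=\alpha+i\beta$ --- or else record that, in the Eq.(\ref{rrr}) coordinates, the correct generator is the plain $4\partial_z\partial_{z^*}$.
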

\begin{proof}
The proof of Eq.(\ref{8x}) is based on a Fourier transform of both sides (it is lengthy but straightforward). 

\end{proof}
Let
\begin{eqnarray}
L(z,w^*;\theta_1,\theta_2|\Theta) = \exp \left [\frac{1}{2}|z|^2+\frac{1}{2}|w|^2- zw^*\right ]  
\braket{z^*(\theta _1, \theta _2)|\Theta|w^*(\theta _1, \theta _2)}
\end{eqnarray}
This is an analytic function of $w^*(\theta _1, \theta _2)$ and $z(\theta _1, \theta _2)$.

\begin{proposition}\label{p13}
\begin{eqnarray}
L(z,z^*;\theta_1,\theta_2|\Theta_1\Theta_2) =   \frac{1}{2}  \left [\exp{\left( \frac{\Delta_{(\zeta,\zeta^*|\theta_1,\theta_2)}}{4} \right)} L(z,\zeta^*;\theta_1,\theta_2|\Theta_1) L(\zeta,z^*;\theta_1,\theta_2|\Theta_2)\right]_{\zeta = z}
\end{eqnarray}
Taylor expansion gives
\begin{eqnarray}
L(z,z^*;\theta_1,\theta_2|\Theta_1\Theta_2) &=&  L(z,z^*;\theta_1,\theta_2|\Theta_1) L(z,z^*;\theta_1,\theta_2|\Theta_2) 
\nonumber\\&&+   \frac{\partial L(z,z^*;\theta_1,\theta_2|\Theta_1)}{2 \partial z^*} \frac{\partial L(z,z^*;\theta_1,\theta_2|\Theta_1)}{\partial z} \nonumber \\
&& +\left [i\sin(\theta_1-\theta_2)\frac{\partial^2 L(z,z^*;\theta_1,\theta_2|\Theta_1) L(z,z^*;\theta_1,\theta_2|\Theta_2)}{4 \partial^2 z^*}  \right] \nonumber \\
&& -\left[i\sin(\theta_1-\theta_2)\frac{\partial^2 L(z,z^*;\theta_1,\theta_2|\Theta_1) L(z,z^*;\theta_1,\theta_2|\Theta_2)}{4 \partial^2 z} \right]
+...
\end{eqnarray}
\end{proposition}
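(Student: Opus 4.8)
The plan is to reduce the diagonal symbol of the product $\Theta_1\Theta_2$ to a Gaussian convolution of the two individual symbols, and then to invoke the Gaussian-to-Laplacian identity of Eq.(\ref{4cv}). First I would note that on the diagonal the prefactor in the definition of $L$ is trivial, so that
\begin{eqnarray}
L(z,z^*;\theta_1,\theta_2|\Theta_1\Theta_2)=\braket{z^*(\theta_1,\theta_2)|\Theta_1\Theta_2|z^*(\theta_1,\theta_2)}.
\end{eqnarray}
I would then insert the resolution of the identity Eq.(\ref{104}) between $\Theta_1$ and $\Theta_2$, writing it in terms of the intermediate coherent states $\ket{v(\theta_1,\theta_2)}$, and invert the definition of $L$ to express the two off-diagonal matrix elements $\braket{z^*|\Theta_1|v}$ and $\braket{v|\Theta_2|z^*}$ in terms of the symbols $L(\cdot,\cdot;\theta_1,\theta_2|\Theta_i)$, each carrying a Gaussian exponential prefactor.

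The second step is bookkeeping on those Gaussians. Multiplying the two prefactors and substituting the intermediate variable $v=\zeta^*$ (so that $d^2v=d^2\zeta$), the non-analytic part collapses to the single kernel $\exp[-|z(\theta_1,\theta_2)-\zeta(\theta_1,\theta_2)|^2]$, which coincides with the squared overlap of Eq.(\ref{e2a}); the analytic remainder is precisely $L(z,\zeta^*;\theta_1,\theta_2|\Theta_1)\,L(\zeta,z^*;\theta_1,\theta_2|\Theta_2)$. This recasts the expression as
\begin{eqnarray}
L(z,z^*;\theta_1,\theta_2|\Theta_1\Theta_2)=\frac{1}{2\pi}\int d^2\zeta\;\exp[-|z(\theta_1,\theta_2)-\zeta(\theta_1,\theta_2)|^2]\,L(z,\zeta^*;\theta_1,\theta_2|\Theta_1)\,L(\zeta,z^*;\theta_1,\theta_2|\Theta_2),
\end{eqnarray}
which is the left-hand side of Eq.(\ref{4cv}) with $K=1$ and $F(\zeta,\zeta^*)=L(z,\zeta^*;\theta_1,\theta_2|\Theta_1)L(\zeta,z^*;\theta_1,\theta_2|\Theta_2)$.

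Applying Eq.(\ref{4cv}) then yields the closed form at once. Here one must treat the $z$ appearing inside the two symbols as a frozen parameter, so that the bifractional Laplacian $\Delta_{(\zeta,\zeta^*|\theta_1,\theta_2)}$ acts only on the convolution variable $\zeta$, with the identification $\zeta=z$ enforced after the operator has acted---exactly the $[\,\cdot\,]_{\zeta=z}$ prescription in the statement. For the Taylor series I would expand $\exp(\Delta_{(\zeta,\zeta^*)}/4)=1+\tfrac14\Delta_{(\zeta,\zeta^*)}+\dots$ and use the analyticity asserted after Eq.(\ref{rrr}) and in proposition \ref{P1}: since $L(z,\zeta^*;\theta_1,\theta_2|\Theta_1)$ is analytic in $\zeta^*$ and $L(\zeta,z^*;\theta_1,\theta_2|\Theta_2)$ is analytic in $\zeta$, one has $\partial_\zeta L(z,\zeta^*|\Theta_1)=0$ and $\partial_{\zeta^*}L(\zeta,z^*|\Theta_2)=0$. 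This kills the cross terms, so that $\partial_\zeta\partial_{\zeta^*}$ distributes one derivative onto each factor while $\partial^2_\zeta$ and $\partial^2_{\zeta^*}$ each act on a single factor, reproducing the displayed first-order corrections.

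The main obstacle is the double role of the variable $z$: it is simultaneously the base point of the convolution and a fixed label inside each symbol, whereas Eq.(\ref{4cv}) in its stated form convolves a function of a single variable. The careful step is therefore to separate these two roles---introducing the dummy $\zeta$ for the convolution slot and only identifying $\zeta=z$ at the end---and to verify that the measure $\tfrac{1}{2\pi}d^2\zeta$ from Eq.(\ref{104}) matches the normalization in Eq.(\ref{4cv}), including the Jacobian $\cos(\theta_1-\theta_2)$ absorbed in passing to the $w(\theta_1,\theta_2)$ variables. Once this matching is in place the remaining differentiations are routine.
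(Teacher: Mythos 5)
Your proposal is correct and takes essentially the same route as the paper: the paper likewise reduces $L(z,z^*;\theta_1,\theta_2|\Theta_1\Theta_2)$ to a convolution of $L(z,w^*;\theta_1,\theta_2|\Theta_1)\,L(w,z^*;\theta_1,\theta_2|\Theta_2)$ against the kernel $D_E(z,w;\theta_1,\theta_2)=|\langle z(\theta_1,\theta_2)|w(\theta_1,\theta_2)\rangle|^2$ and then applies Eq.(\ref{4cv}). You additionally supply the steps the paper leaves implicit, namely the insertion of the resolution of the identity Eq.(\ref{104}), the identification of the product of Gaussian prefactors with the squared overlap of Eq.(\ref{e2a}), and the analyticity bookkeeping that isolates the displayed first-order terms in the Taylor expansion.
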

\begin{proof}
For two arbitrary operators $\Theta_1, \Theta_2$ we have that,
\begin{align}
&L(z,z^*;\theta_1,\theta_2|\Theta_1\Theta_2) = \int d^2 w D_E(z,w;\theta_1,\theta_2) L(z,w^*;\theta_1,\theta_2|\Theta_1) L(w,z^*;\theta_1,\theta_2|\Theta_2) \nonumber \\
&D_E(z,w;\theta_1,\theta_2) = |\langle {z(\theta _1, \theta _2)}\ket{w(\theta _1, \theta _2)}|^2
\end{align}
 Then using Eq.(\ref{4cv}), we prove Proposition \ref{p13}.
\end{proof}

\section{Discussion}
We have studied bifractional transforms, and their application in the area of phase space methods.
They provide a two-parameter ($\theta _1,\theta _2$) interpolation between other known quantities.
We have explained that they do not form a group and we used groupoids to describe their mathematical structure.

The work generalizes the traditional concept of phase space. The Wigner function $W(\alpha, \beta)$ describes the quantum noise in the position of a particle in the phase space $\alpha-\beta$.
The Weyl function describes quantum correlations in the space $\alpha^\prime -\beta ^\prime$
of position and momentum increments.
$(\alpha, \beta ^\prime)$ are dual variables in the Fourier transform sense, and the same is true for
$(\alpha ^\prime , \beta)$. Through fractional Fourier transforms, we work in an intermediate phase space
$\alpha _{\theta _1}-\beta _{\theta _2}$, where $\alpha _{\theta _1}$ is in the plane $\alpha-\beta ^\prime$, and $\beta _{\theta _2}$ is in the plane $\alpha^\prime\boldsymbol {-}\beta$.
When $\theta _1=\theta _2=0$, the $\alpha _{\theta _1}- \beta _{\theta _2}$ is the
$\alpha-\beta$ position-momentum phase space, associated with the Wigner function and quantum noise.
 When $\theta _1=\theta _2=\frac{\pi}{2}$, the $\alpha _{\theta _1}- \beta _{\theta _2}$ is the
 dual phase space
$\alpha^\prime -\beta^\prime $ of position increment and momentum increment, associated with the Weyl function and quantum correlations.
Our intermediate phase space is related to novel intermediate quantities between quantum correlations and quantum noise, and reveals deep links between them.

Using bifractional transforms we have defined bifractional coherent states.
Their analyticity properties and their resolution of the identity have been presented in proposition \ref{P1}.
They are the counterparts in the phase space $\alpha _{\theta _1}-\beta _{\theta _2}$,
of the standard coherent states in the phase space $\alpha-\beta$.

We have also defined  bifractional Wigner functions $A(\alpha, \beta;\theta_1,\theta_2|\rho)$.
We have studied their properties, and interpreted them physically as quantities which interpolate between quantum noise and quantum correlations.
We have also studied the Moyal star formalism for bifractional Wigner functions, and the corresponding  Berezin formalism (proposition \ref{p13}). 
This 
 provides a complete study of the $\alpha _{\theta _1}-\beta _{\theta _2}$  phase space
 that we introduced in this paper.

\newpage

\begin{figure}
\caption{A non-orthogonal frame of $(\theta _1,\theta _2)$ axes in phase space}
\includegraphics[width=0.5\textwidth]{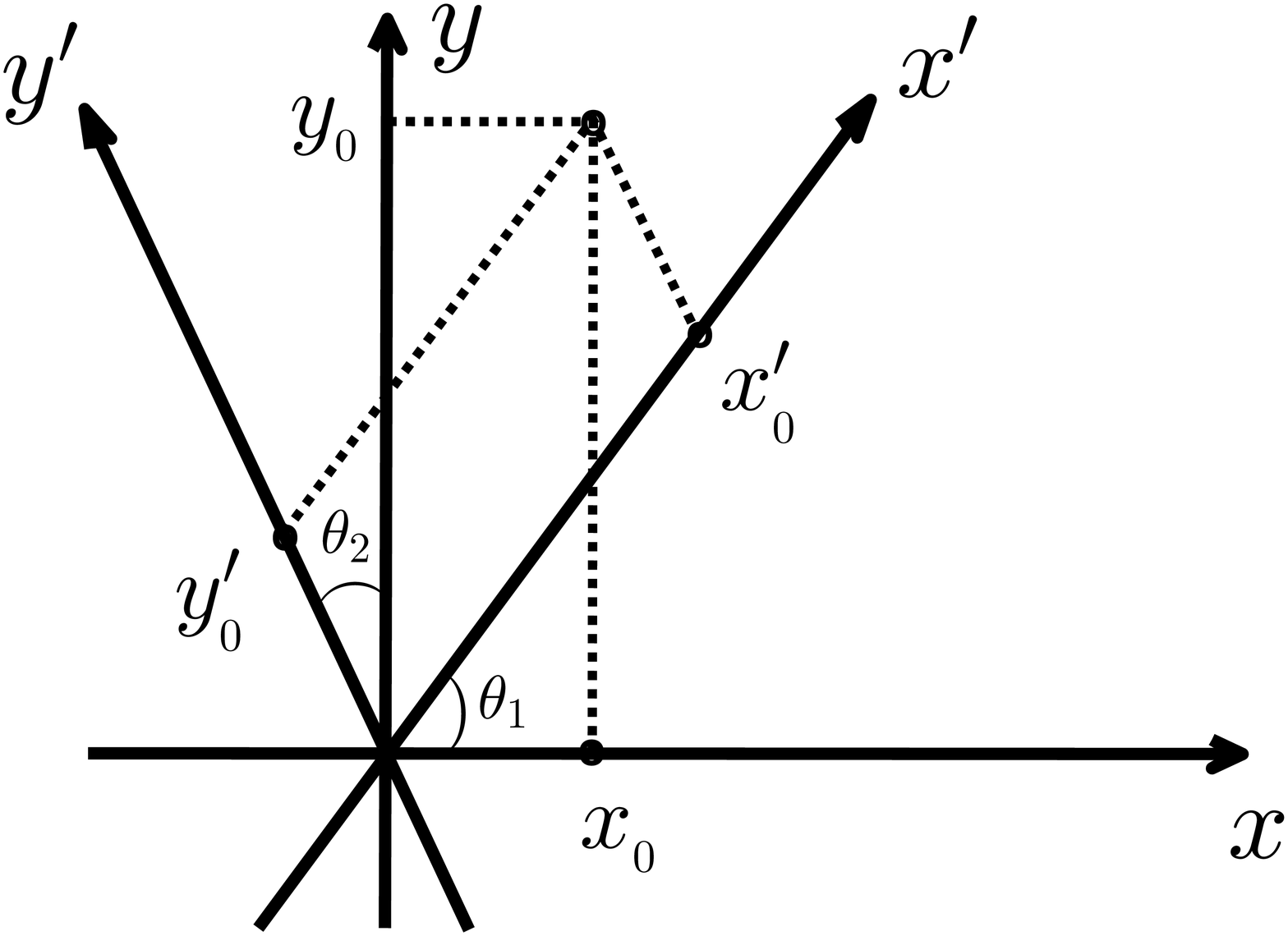}
\label{f1}
\end{figure}

\begin{figure}
\caption{The $\delta \alpha (\frac{\pi}{2},\theta_2)\delta \beta (0,0)$ as a function of $\theta_2$
(in rads), for the density matrix of Eq.(\ref{bbb}).}
\includegraphics[width=0.5\textwidth]{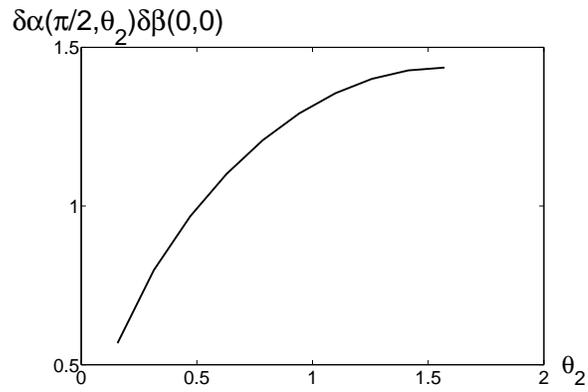}
\label{f2}
\end{figure}

\begin{figure}
\caption{The $\delta \alpha (\frac{\pi}{4},\frac{\pi}{4})\delta \beta (0,0)$ as a function of $p$, for the density matrix of Eq.(\ref{bbb}).}
\includegraphics[width=0.5\textwidth]{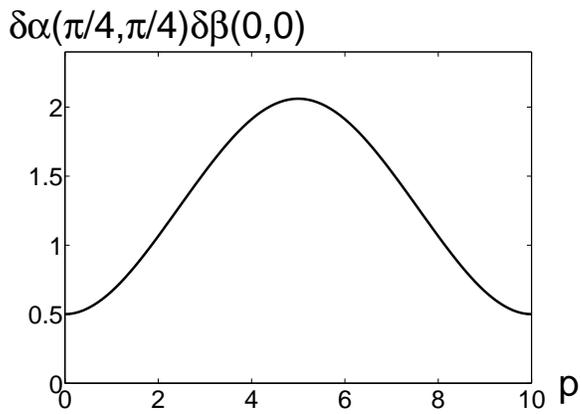}
\label{f3}
\end{figure}

\end{document}